\newcommand{\xhdr}[1]{\vspace{0.2mm}\noindent{{\bf #1.}}}
\newcommand{\tcim}{\textsc{TCIM}\xspace}
\newcommand{\tcimbudget}{\textsc{TCIM-Budget}\xspace}
\newcommand{\tcimcover}{\textsc{TCIM-Cover}\xspace}
\newcommand{\fairtcimbudget}{\textsc{\allowbreak FairTCIM-Budget}\xspace}
\newcommand{\fairtcimcover}{\textsc{\allowbreak FairTCIM-Cover}\xspace}
\newcommand{\ic}{\textsc{IC}\xspace}
\newcommand{\lt}{\textsc{LT}\xspace}
\begin{document}

\title{On the Fairness of Time-Critical Influence Maximization in Social Networks}

  \author{$\qquad$Junaid Ali$^\dagger$, Mahmoudreza Babaei$^\ddagger$, Abhijnan Chakraborty$^*$, Baharan Mirzasoleiman$^{**}$, \\ Krishna P. Gummadi$^\dagger$, Adish Singla$^\dagger$ \\ 
 $^\dagger$Max Planck Institute for Software Systems,  \\
 $^\ddagger$Max Planck Institute for Human Developement, \\  $^*$Indian Institute of Technology Delhi, $^{**}$UCLA
  }

\maketitle

\begin{abstract} 
Influence maximization has found applications in a wide range of real-world problems, for instance, viral marketing of products in an online social network, and propagation of valuable information such as job vacancy advertisements. While existing algorithmic techniques usually aim at maximizing the total number of people influenced, the population often comprises several socially salient groups, e.g., based on gender or race. As a result, these techniques could lead to disparity across different groups in receiving important information. Furthermore, in many applications, the spread of influence is time-critical, i.e., it is only beneficial to be influenced before a deadline. As we show in this paper, such time-criticality of information could further exacerbate the disparity of influence across groups. This disparity could have far-reaching consequences, impacting people's prosperity and putting minority groups at a big disadvantage. In this work, we propose a notion of \emph{group fairness} in \emph{time-critical influence maximization}. We introduce surrogate objective functions to solve the influence maximization problem under fairness considerations. By exploiting the submodularity structure of our objectives, we provide computationally efficient algorithms with guarantees that are effective in enforcing fairness during the propagation process. Extensive experiments on synthetic and real-world datasets demonstrate the efficacy of our proposal.
\end{abstract}

\begin{IEEEkeywords}
Influence Maximization; Algorithmic Fairness; Social Networks
\end{IEEEkeywords}

\vspace{-6mm}
\section{Introduction}
\vspace{-1mm}
The problem of {\it Influence Maximization} has been widely studied due to its application in multiple domains such as viral marketing~\cite{richardson2002mining}, social recommendations~\cite{ye2012exploring}, propagation of information related to jobs, financial opportunities or public health programs~\cite{banerjee2013diffusion,yadav2016using}. Over the years, extensive research efforts have focused on the cascading behavior, diffusion and spreading of ideas, or containment of diseases~\cite{leskovec2007cost, kempe2003maximizing, richardson2002mining, wallinga2004different, leskovec2010inferring}. The idea is to identify a set of initial sources (i.e., {\it seed nodes}) in a social network who can influence other people (e.g., by propagating key information), and traditionally the goal has been to maximize the total number of people influenced in the process (e.g., who received the information being propagated)~\cite{ kempe2003maximizing, goyal2013minimizing,  carnes2007maximizing}. 

Real-world social networks, however, are often not homogeneous and comprise different groups of people. Due to the disparity in their population sizes, potentially high propensity towards creating within-group links \cite{mcpherson2001birds}, and differences in dynamics of influences among different groups \cite{DBLP:conf/wsdm/SinglaW09}, the structure of the social network can cause disparities in the influence maximization process. For example, selecting most of the seed nodes from the majority group might maximize the total number of influenced nodes, but very few members of the minority group may get influenced. In many application scenarios such as propagation of job or health-related information, such disparity can end up impacting people's livelihood and some groups may become impoverished in the process.

Moreover, some applications are also {\it time-critical} in nature~\cite{chen2012time}. For example, many job applications typically have a deadline by which one needs to apply; if information related to the application reaches someone after the deadline, it is not useful. Similarly, in viral marketing, many companies offer discount deals only for few days (hours); getting this information late does not serve the recipient(s). More worryingly, if one group of people gets influenced (i.e., they get the information) faster than other groups, it could end up exacerbating the inequality in information access. This is possible if the majority group is better connected and more central in the network than the minority group. Thus, in time-critical application scenarios, focusing on the traditional criteria of maximizing the number of influenced nodes can have a disparate impact on different groups. This disparity in time-critical applications, in turn, can put minority and under-represented groups at a big disadvantage with far-reaching consequences.
In this paper, we attempt to mitigate such unfairness in time-critical influence maximization (\tcim), and we focus on two settings: (i) where the budget (i.e., the number of seeds) is fixed and the goal is to find a seed set which maximizes the time-critical influence, we call this as \tcimbudget problem, and (ii) where a certain quota or fraction of the population should be influenced under the prescribed time deadline, and the goal is to find such a seed set of minimal size, we call this as \tcimcover problem.

\vspace{-4mm}
\subsection{Our Contributions} 
Our first contribution is to formally introduce the notion of fairness in time-critical influence maximization, which requires that {\it within a prescribed time deadline, the fraction of influenced nodes should be equal across different groups}. We highlight, via experiments and an illustrative example, that the standard algorithmic techniques for solving \tcimbudget and \tcimcover problems lead to unfair solutions, and the disparity across groups could get worse with tighter time deadline. 
Secondly, we study the effect of disparity of influence  between groups: (i) by varying graph properties, such as connectivity and relative group sizes etc., and (ii) by varying \tcim algorithmic properties, such as seed budget, reach quota and time deadline etc.

We introduce two formulations of \tcim problems under fairness considerations, namely \fairtcimbudget and \fairtcimcover. 
As our third contribution, we propose {\it monotone submodular} surrogates for solving both of these NP-Hard problems. Though the surrogate problems are still NP-Hard, we propose a greedy approximation with provable guarantees.

We evaluate our proposed solutions over several synthetic and two real-world social networks and show that they are successful in enforcing the aforementioned fairness notion. Enforcing fairness does come at the cost of a reduction in performance.
However, as guaranteed by our theoretical results, our experiments indeed demonstrate that this cost of fairness, i.e.,  reduction in performance, is bounded for our approach.

\vspace{-5.0mm}
\section{Related Work}
In this section, we briefly review the related literature on influence maximization and algorithmic fairness.

\xhdr{Influence Maximization}
Richardson et al.~\cite{richardson2002mining} first introduced Influence Maximization as an algorithmic problem, and proposed a heuristic approach to find a set of nodes whose initial adoption of a certain idea/product can maximize the number of further adopters. Over the years, extensive research efforts have focused on the cascading behavior, diffusion and spreading of ideas or containment of diseases, by identifying the set of influential nodes that maximizes the influence through a network (often in real-time) 
~\cite{leskovec2007cost, kempe2003maximizing, richardson2002mining, wallinga2004different, leskovec2010inferring}. 

Typically, identifying the most influential nodes is studied in two ways: (i) using network structural properties to find the set of most central nodes \cite{kourtellis2013identifying,kempe2003maximizing}, 
and (ii) formulating the problem as discrete optimization \cite{goyal2013minimizing, kempe2003maximizing, babaei2013revenue}. 
Kempe et al. \cite{kempe2003maximizing}, studied influence maximization under different social contagion models and showed that submodularity of the influence function can be used to obtain provable approximation guarantees. Since then, there has been a large body of work studying various extensions \cite{goyal2013minimizing,bharathi2007competitive,budak2011limiting,carnes2007maximizing, huang2017revisiting}.  However, the notion of fairness in the influence maximization problem has not been studied by this line of previous works.

\vspace{1mm}
\xhdr{Fairness in Algorithmic Decision Making}
Recently a growing amount of work has focused on bias and unfairness in algorithmic decision-making systems \cite{ dwork2012fairness,feldman2015certifying, hardt2016equality}. The aim here is to examine and mitigate unfair decisions that may lead to discrimination. 
Although fairness along different dimensions of political science, moral philosophy, economics, and law has been extensively studied~\cite{conitzer2019group, fain2018fair, segal2017democratic, suksompong2018approximate} , only a few contemporary works have investigated fairness in influence maximization, as described next.

\vspace{1mm}
\xhdr{Contemporary Works}
Very recently, Fish et al. \cite{fish_dmt}, proposed a notion of individual fairness in information access, but did not consider the group fairness aspects. In addition, some prior works have proposed constrained optimization problems to encourage diversity in selecting the most influential nodes~\cite{bredereck2018multiwinner,faliszewski2017multiwinner,benabbou2018diversity,aghaei2019learning}. 

A recent paper by Rahmattalabi et al. \cite{rahmattalabi2019exploring}, proposes group fairness in influence maximization for robust covering problems. This method is different from ours in the following ways: i) their notion of fairness is maximizing the minimum influence for any group, while we propose parity of influence among different groups; ii) they consider a setting where seeds could be deactivated randomly while we do not have any stochasticity in seed activation; iii) they consider seed nodes to spread influence only to their immediate neighbors, while we vary the allowed time deadline and show its effect on disparity among different groups. We also demonstrate the effectiveness of our methods for different time deadlines on several datasets; iv) they propose an integer linear programming set up while we propose submodular proxies, akin to the traditional methods, which can be approximately solved using the greedy heuristic.

In concurrent works, Khajehnejad et al., \cite{khajehnejad2020adversarial}, and Tsang et al., \cite{Tsang_dmt}, proposed methods to achieve group fairness in influence maximization. 
However, their works are very different from our approach in three ways: i) they propose a different problem formulation with objective that does not have submodular structural properties, ii) they only study the problem under budget constraint, and iii) they do not consider the time-critical aspect of influence in their definition of fairness for influence maximization. This could result in majority groups being influenced before the minority, and can lead to disparity in applications where the timing of being influenced/informed is critical. In our work, we introduce a submodular objective that directly addresses the time-criticality in influence maximization problem under budget constraint as well as coverage constraint.

\begin{figure*}[t!]
        \begin{minipage}[t]{.25\linewidth}
    \includegraphics[width=\linewidth]{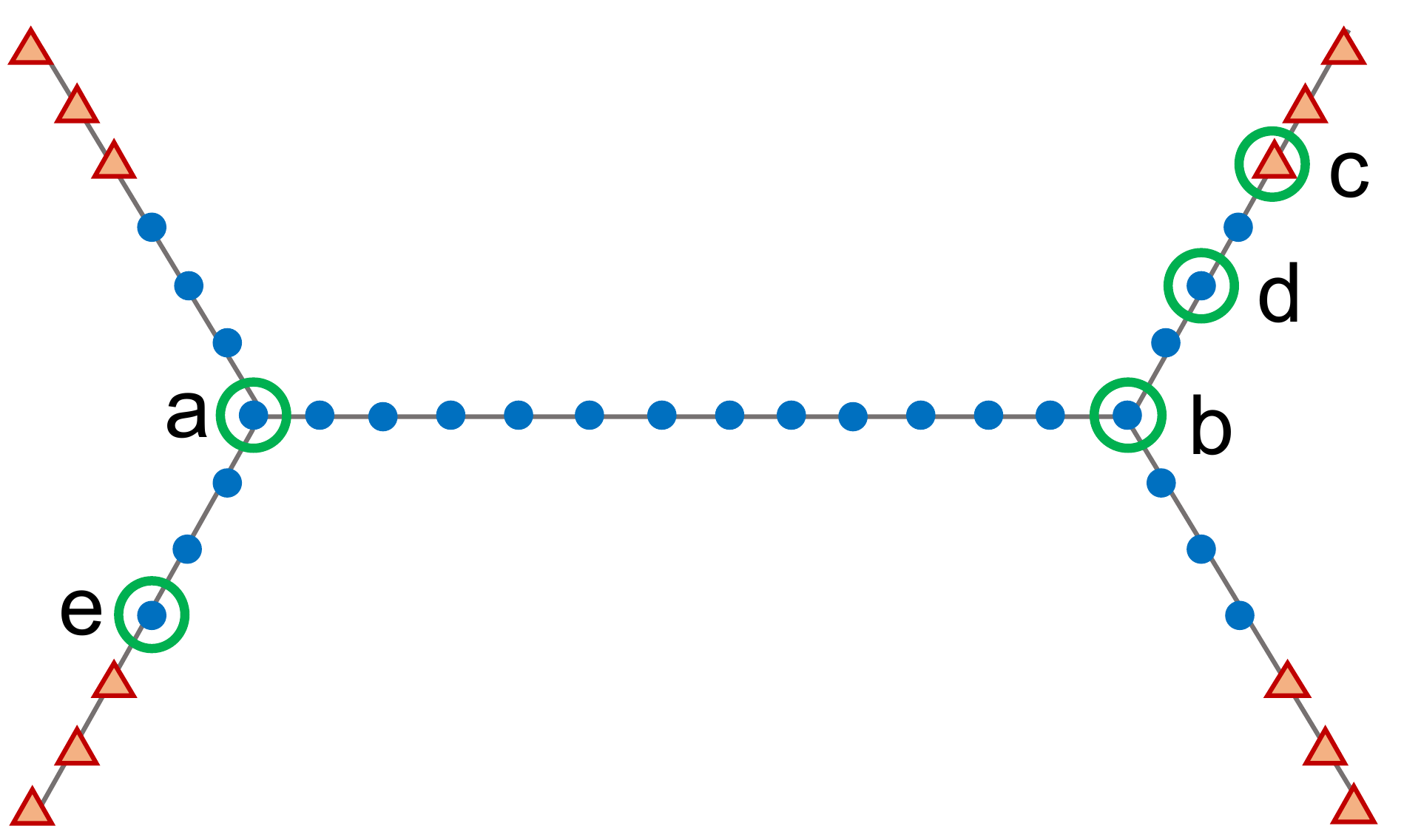}
    \label{fig:example_graph}
        \end{minipage}
        \begin{minipage}[b]{0.75\linewidth}
    \begingroup
    \resizebox{\linewidth}{!}
    {
	\renewcommand{\arraystretch}{1.2}
	\small
		\begin{tabular}[b]{c||cccc||cccc}\hline
      \toprule
      &  \multicolumn{4}{c||}{solution to \tcimbudget~\ref{eq:budget_problem}} &   \multicolumn{4}{c}{solution to \fairtcimbudget~\ref{eq:surrogate_fair_budget_problem}} \\
      \midrule
      & $S$ & $\frac{f(S; \Vcal, \Gcal)}{|\Vcal|}$ & $\frac{f(S; \Vcal_1, \Gcal)}{|\Vcal_1|}$ & $\frac{f(S; \Vcal_2, \Gcal)}{|\Vcal_2|}$ &$S$ & $\frac{f(S; \Vcal, \Gcal)}{|\Vcal|}$ & $\frac{f(S; \Vcal_1, \Gcal)}{|\Vcal_1|}$ & $\frac{f(S; \Vcal_2, \Gcal)}{|\Vcal_2|}$ \\
      \midrule
            $\tau=\infty$ & $\{a, b\}$ & 0.38 & 0.48 & 0.16 & $\{a, c\}$ &0.31 & 0.33 & 0.27 \\ 
      $\tau=4$       & $\{a, b\}$ &0.32 & 0.44 & 0.08 & $\{d, e\}$ &0.25 & 0.26 & 0.22 \\ 
      $\tau=2$       & $\{a, b\}$ &0.24 & 0.36 & 0.00 & $\{a, c\}$ &0.21 & 0.22 & 0.18 \\ 
      \bottomrule
    \end{tabular}
    
    }
    \endgroup
    \end{minipage}
    \captionlistentry[table]{Motivating Example}
          \vspace{-2mm}
    \caption{\looseness-1 An example to illustrate the disparity across groups in the standard approaches to \tcim. (Left) Graph with $|\Vcal| = 38$ nodes belonging to two groups shown in ``blue dots" ($|\Vcal_{1}| = 26$) and ``red triangles" ($|\Vcal_{2}| = 12$). (Right) We compare an optimal solution to the standard \tcimbudget problem~\ref{eq:budget_problem} and an optimal solution to our formulation of \tcimbudget with fairness considerations given by \fairtcimbudget problem~\ref{eq:surrogate_fair_budget_problem}. For different time critical deadlines $\tau$, normalized utilities are reported for the whole population $\Vcal$, for the ``blue dots" group $\Vcal_1$, and for the ``red triangles" group $\Vcal_2$. As $\tau$ reduces, the disparity between groups is further exacerbated in the solution to \tcimbudget problem~\ref{eq:budget_problem}. Solution to \fairtcimbudget problem~\ref{eq:surrogate_fair_budget_problem} achieves high utility and low disparity for different deadlines $\tau$.}
          \label{fig:example_graph}
  \vspace{-4mm}
  \end{figure*}
\vspace{-4mm}
\section{Background on Time-Critical Influence Maximization (TCIM)}\label{sec:background}
In this section, we provide the necessary background on the problem of time-critical influence maximization (henceforth, referred to as \tcim for brevity). First, we formally introduce a well-studied influence propagation model and specify the notion of time-critical influence that we consider in this paper. Then, we discuss two discrete optimization formulations to tackle the \tcim problem.

\vspace{-2mm}
\subsection{Influence Propagation in Social Network}
Consider a directed graph $\Gcal = (\Vcal, \Ecal)$, where $\Vcal$ is the set of nodes and $\Ecal$ is the set of directed edges connecting these nodes. For instance, in a social network the nodes could represent people and edges could represent friendship links between people.  An undirected link between two nodes can be represented by simply considering two directed edges between these nodes. 

There are two classical influence propagation models that are studied in the literature \cite{kempe2003maximizing}: (i) Independent Cascade model (\ic) and (ii) Linear Threshold (\lt) model. In this paper, we will consider \ic  model and our results can easily be extended to the \lt model. 

In the \ic model, there is a probability of influence associated with each edge denoted as $p_{\Ecal} := \{p_e \in [0, 1]: e \in \Ecal\}$. Given an initial seed set $S \subseteq \Vcal$, the influence propagation proceeds in discrete time steps $t= \{0, 1, 2, \ldots, \}$ as follows. At $t=0$, the initial seed set $S$ is ``activated" (i.e., influenced). Then, at any time step $t > 0$, a node $v \in \Vcal$ which was activated at time $t-1$ gets a chance to influence its neighbors (i.e., set of nodes $\{w : (v, w) \in \Ecal\}$). The influence propagation process stops at time $t > 0$ if no new nodes get influenced at this time. Under the \ic model, once a node is activated it stays active throughout the process and each node has only one chance to influence its neighbors. 

\looseness-1
Note that the influence propagation under \ic model is a stochastic process: the stochasticity here arises because of the random outcomes of a node $v$ influencing its neighbor $w$ based on the Bernoulli distribution $p_{(v, w)}$. An outcome of the influence propagation process can be denoted via a set of timestamps $\{t_v \geq 0 : v \in \Vcal\}$ where $t_v$ represents the time at which a node $v \in \Vcal$ was activated. We have  $t_v = 0 \textnormal{ iff } v \in S$ and for convenience of notation, we define $t_v = -1$ to indicate that the node $v$ was not activated in the process.

\vspace{-2.5mm}
\subsection{Utility of Time-Critical Influence}
As mentioned earlier, we focus on the application settings where the spread of influence is time-critical, i.e., it is more beneficial to be influenced earlier in the process. In particular, we adopt the well-studied notion of time-critical influence as proposed by \cite{chen2012time}. Their time-critical model is captured via a deadline $\tau$: If a node is activated before the deadline, it receives a utility of $1$, otherwise it receives no utility. This simple model captures the notion of timing in many important real-world applications such as viral marketing  of an online product with limited availability, information propagation of job vacancy information, etc.

Given the influence propagation model and the notion of time-critical aspect via a deadline $\tau$, we quantify the utility of time-critical influence for a given seed set $S$ on a set of target nodes $Y \subseteq \Vcal$ via the following: 
\begin{equation}\label{eq:time_critical_utility}
 f_{\tau} (S; Y, \Gcal) = \EE \Big[ \sum_{v \in Y, t_v \geq 0} \II(t_v \le \tau) \Big],	
\end{equation}
where the expectation is w.r.t. the randomness of the outcomes of the \ic  model.  The function is parametrized by deadline $\tau$, set $Y \subseteq \Vcal$ representing the set of nodes over which the utility is measured (by default, one can consider $Y = \Vcal$), and the underlying graph $\Gcal$ along with edge activation probabilities $p_{\Ecal}$. Given a fixed value of these parameters, the utility function $f_{\tau}: 2^\Vcal \rightarrow \mathbb{R}_{\geq 0}$ is a set function defined over the  seed set $S \subseteq \Vcal$.  Note that the constraint $t_v \geq 0$ represents the node was activated and the constraint $t_v \le \tau$ represents that the activation happened before the deadline $\tau$.

\vspace{-4mm}
\subsection{TCIM as Discrete Optimization Problem}
Next, we present two settings under which we study \tcim by casting it as a discrete optimization problem.

\subsubsection{Maximization under Budget Constraint (\tcimbudget)}
In the maximization problem under budget constraint, we are given a fixed budget $B > 0$ and the goal is to find an optimal set of seed nodes that maximize the expected utility. Formally, we state the problem as 
\begin{align}\label{eq:budget_problem}
  \max_{S \subseteq \Vcal}{f_{\tau}(S; \Vcal, \Gcal)} 
  \quad \text{ subject to } |S| \leq B. \tag{P1} 
\end{align}

\subsubsection{Minimization under Coverage Constraint (\tcimcover)}
In the minimization problem under coverage constraint, we are given a  quota $Q \in [0, 1]$ representing the minimal fraction of nodes that must be activated or ``covered" by the influence propagation in expectation. The goal is then to find an optimal set of seeds of minimal size that achieves the desired coverage constraint. We formally state the problem as 
\begin{align}\label{eq:cover_problem}
 \min_{S \subseteq \Vcal} |S| 
  \quad \text{ subject to }  \frac{f_{\tau}(S; \Vcal, \Gcal)}{|\Vcal|} \geq Q. \tag{P2} 
\end{align}
\vspace{-4.5mm}
\subsection{Submodularity and Approximate Solutions}\label{sec:background:submodular}
Next, we present some key properties  of the utility function $f_\tau(.)$ to get a better understanding of the above-mentioned optimization problems.  In their seminal work, \cite{kempe2003maximizing} showed that the utility function without time-critical deadline, i.e., $f_\infty(.): S \rightarrow \mathbb{R}_{+}$, is a non-negative, monotone,  submodular set function w.r.t. the optimization variable $S \subseteq \Vcal$. Submodularity is an intuitive notion of diminishing returns and optimization of submodular set functions finds numerous applications in machine learning and social networks, such
as influence maximization \cite{kempe2003maximizing}, sensing \cite{DBLP:conf/aaai/KrauseG07}, information gathering \cite{DBLP:conf/ijcai/SinglaHKWK15}, and active learning \cite{DBLP:conf/uai/GuilloryB11} (see \cite{krause2014submodular} for a survey on submodular function optimization and its applications).

Chen et. al \cite{chen2012time} showed that the utility function for the general time-critical setting for any $\tau$ also satisfies these properties.  Submodularity is an intuitive notion of diminishing returns, stating that, for any sets $A \subseteq A' \subseteq \Vcal$, and any node $a \in \Vcal \setminus A'$, it holds that (omitting the parameters $\Vcal$ and $\Gcal$ for brevity):
\begin{align*}
f_\tau(A \cup \{a\}) - f_\tau(A) \geq f_\tau(A' \cup \{a\}) - f_\tau(A').
\end{align*}

Existing works \cite{1978-_nemhauser_submodular-max,1998-_feige_threshold-of-ln-n,krause2014submodular} have shown that \ref{eq:budget_problem} and \ref{eq:cover_problem} are NP-Hard and hence finding the optimization solution is intractable. However, on a positive note, one can exploit the submodularity property of the function to design efficient approximation algorithms with provable guarantees~\cite{1978-_nemhauser_submodular-max,krause2014submodular}. In particular,  we can run the following greedy heuristic: start from an empty set, iteratively add a new node to the set that provides the maximal marginal gain in terms of utility, and stop the algorithm when the desired constraint on budget or coverage is met. This greedy algorithm provides the following guarantees for these two problems:.
\begin{itemize}
\item for the \tcimbudget problem~\ref{eq:budget_problem}, the greedy algorithm returns a set $\hat{S}$ that guarantees the following lower bound on the utility: $f_\tau(\hat{S}; \Vcal, \Gcal) \geq (1 - \frac{1}{e}) \cdot f_\tau(S^*; \Vcal, \Gcal)$ where $S^*$ is an optimal solution to problem~\ref{eq:budget_problem}.
\item for the \tcimcover problem~\ref{eq:cover_problem}, the greedy algorithm returns a set $\hat{S}$ that guarantees the following upper bound on the seed set size: $|\hat{S}| \leq \ln(1+ |\Vcal|) \cdot |S^*|$  where $S^*$ is an optimal solution to problem~\ref{eq:cover_problem}.
\end{itemize}

\vspace{-5mm}
\section{Measuring Unfairness in TCIM}\label{sec:notions}
In this section, we highlight the disparity in utility across population resulting from the solution to the standard \tcim problem formulations, and introduce a measure of unfairness in \tcim.

\vspace{-4mm}
\subsection{Socially Salient Groups and Their Utilities}\label{sec:notions:groups} 
The current approaches to \tcim consider all the nodes in $\Vcal$ to be homogeneous.
We capture the presence of different socially salient groups in the population by dividing individuals into $k$ disjoint groups. Here, socially salient groups could be based on some sensitive attribute such as gender or race. We denote the set of nodes in each group $i \in \{1, 2, \ldots, k\}$ as $\Vcal_i \subseteq \Vcal$, and we have $\Vcal = \cup_{i} \Vcal_i$. For any given seed set $S$, we define the utilities for a group $i$ as $f_{\tau} (S; \Vcal_{i}, \Gcal)$ by setting target nodes $Y = \Vcal_{i}$  in Eq.~\ref{eq:time_critical_utility}.

\vspace{-3mm}
\subsection{Disparity in Utility Across Groups}\label{sec:notions:example}
In the standard formulations for \tcim problem, i.e., \tcimbudget problem~\ref{eq:budget_problem} and \tcimcover problem~\ref{eq:cover_problem}, the utility $f_{\tau} (S; \Vcal, \Gcal)$ is optimized for the whole population $ \Vcal$ without considering their groups. Clearly, a solution to \tcim problem can, in general, lead to high disparity in utilities of different groups. 

In particular, this disparity in utility across groups arises from several factors in which two groups differ from each other. One of the factors is that the groups are of different sizes, i.e., one group is a minority. The different group sizes could, in turn, lead to selecting seed nodes from the majority group when optimizing for utility $f_{\tau} (S; \Vcal, \Gcal)$ in problems \ref{eq:budget_problem}~and~\ref{eq:cover_problem}. Another factor is related to the connectivity and centrality of nodes from different groups. The solution to the optimization problems  \ref{eq:budget_problem} and \ref{eq:cover_problem} tend to favor nodes which are more central and have high-connectivity. Finally, given the above two factors, we note that the disparity in influence across groups can be further exacerbated for lower values of deadline $\tau$ in the time-critical influence maximization. 

In Figure~\ref{fig:example_graph}, we provide an example to illustrate the disparity across groups in the standard approaches to \tcim. In particular, to show this disparity, we consider the \tcimbudget problem~\ref{eq:budget_problem}, and it is easy to extend this example to show disparity in \tcimcover problem~\ref{eq:cover_problem}. The graph that we consider in this example (see Figure~\ref{fig:example_graph} caption for details) has the two characteristic properties that we discussed above: (i) group $V_2$ is in minority with less than half of the size of group $V_1$, (ii) group $V_1$ has more central nodes compared to group $V_2$, and (iii) nodes in group $V_1$ have higher connectivity than nodes in group $V_2$. We consider the probability of influence in the graph to be $p_e = 0.7$ for all edges, and study the optimization problem~\ref{eq:budget_problem} for budget $B = 2$.
 
For different time critical deadlines $\tau$, we report the following normalized utilities: $\frac{f(S; \Vcal, \Gcal)}{|\Vcal|}$ for the whole population $\Vcal$, $\frac{f(S; \Vcal_1, \Gcal)}{|\Vcal_1|}$ for the group $\Vcal_1$, and $\frac{f(S; \Vcal_2, \Gcal)}{|\Vcal_2|}$ for the group $\Vcal_2$. Here, normalization captures the notion of ``average" utility per node in a group, and automatically allows us to account for the differences in the group sizes. As can be seen in Figure~\ref{fig:example_graph}, the optimal solution to the problem consistently picks set $S = \{a, b\}$  comprising of the most central and high-connectivity nodes. While these nodes  maximize the total utility, they lead to a high disparity in the normalized utilities across groups. As the influence becomes more time-critical, i.e., $\tau$ is reduced, we see an increasing disparity as discussed above. For $\tau=2$, the utility of group $\Vcal_2$ reduces to 0.

\vspace{-3.0mm}
\subsection{Measure of Unfairness}\label{sec:notions:fairnessdef}
Next, in order to guide the design of fair solutions to \tcim problems, we introduce a formal notion of group unfairness in \tcim. In particular, we  measure the (un-)fairness or disparity of an algorithm by the maximum \textit{disparity in normalized utilities} across all  pairs of socially salient groups, given by:
\begin{equation} \label{eq:measure_fair}
\max_{i,j \in \{1, 2, \ldots, k\}} \bigg| \frac{f_\tau(S;\Vcal_{i},\Gcal)}{|\Vcal_{i}|} -  \frac{f_\tau(S;\Vcal_{j},\Gcal)}{|\Vcal_{j}|} \bigg|.
\end{equation}
As discussed above (see Section~\ref{sec:notions:example}), normalization w.r.t. group sizes captures the notion of average utility per node in a group and hence makes the measure agnostic to the group size.  In the next section, we seek to design fair algorithms for \tcim problems that have low disparity (or more fairness) as measured by Eq.~\ref{eq:measure_fair}.

\vspace{-4mm}
\section{Achieving Fairness in TCIM}\label{sec:algorithms}
In this section,  we seek to develop efficient algorithms for \tcim problems under fairness considerations that have low disparity measured by Eq.~\ref{eq:measure_fair} while maintaining high performance.

\vspace{-3mm}
\subsection{Fair TCIM-Budget}\label{sec:algorithms:budget}
\subsubsection{Fairness considerations in \tcimbudget}
A fair \tcim algorithm under budget constraint should seek to achieve the following two objectives: (i) maximizing total influence for the whole population $\Vcal$ as was done in the standard \tcimbudget problem~\ref{eq:budget_problem}, and (ii) enforcing fairness by ensuring that disparity across different groups as per Eq.~\ref{eq:measure_fair} is low.  Clearly, enforcing fairness would lead to a reduction in total influence, and we seek to design algorithms that can achieve a good trade-off between these two objectives.  We formulate the following fair variant of \tcimbudget problem~\ref{eq:budget_problem} that captures this trade-off:
\begin{align*}\label{eq:fair_budget_problem}
&\max_{S \subseteq \Vcal} \underbrace{\sum_i^k {f_{\tau}(S; \Vcal_i, \Gcal)}}_\text{Maximize number of influenced nodes}  \tag{P3} \\
&\text{subject to } \underbrace{|S| \leq B}_\text{Bound seed set size }, \\ &\text{and } \underbrace{\max_{i,j} \Big| \frac{f_\tau(S;\Vcal_{i},\Gcal)}{|\Vcal_{i}|} -  \frac{f_\tau(S;\Vcal_{j},\Gcal)}{|\Vcal_{j}|} \Big| \leq c}_\text{Minimize disparity}
\end{align*}
where $c \in [0,1]$ is a hyperparameter which indicates the maximum level of allowed disparity among the groups. This problem might not be feasible for all the values of $c$. So, one would have to tune this hyperparameter for feasibility and the desired level of disparity. Problem~\ref{eq:fair_budget_problem} has two main objectives, i.e., finding $B$ seeds which will i) \textbf{maximize the total  influence}, which is exactly the same as the traditional influence maximization given in problem~\ref{eq:budget_problem}--- here written as the sum of influences over all the groups, and, additionally, ii) \textbf{minimize the disparity of influence} between different groups up to the prescribed threshold. 

We note that problem~\ref{eq:fair_budget_problem} is NP-Hard and a challenging discrete optimization problem and it does not have the structural properties of submodularity as was the case for the standard \tcimbudget problem~\ref{eq:budget_problem}.

\vspace{-1mm}
\subsubsection{Surrogate \fairtcimbudget with guarantees}

Instead of directly solving problem~\ref{eq:fair_budget_problem}, we introduce a novel surrogate problem that would allow us to indirectly trade-off the two objectives of maximizing total influence and minimizing disparity across groups, as follows:
\begin{align*}\label{eq:surrogate_fair_budget_problem}
  \max_{S \subseteq \Vcal} \sum_{i=1}^k \Hcal(f_\tau(S;\Vcal_{i},\Gcal))  
  \quad \text{ subject to } |S| \leq B, \tag{P4}
\end{align*}
where $\Hcal$ is a non-negative, monotone concave function. 

Optimizing problem~\ref{eq:surrogate_fair_budget_problem} captures both the objectives of the original: i) \textbf{maximizing influence}: since the objective is monotonically increasing it encourages picking more influential nodes, ii) \textbf{minimizing the disparity of influence}: Passing the group influence functions through a monotone concave function $\Hcal$ rewards selecting seeds that would lead to higher influence on under-represented groups early in the selection process; this in turn helps in reducing disparity across groups under the \textit{assumption} that the under-represented groups not only have lower influence in terms of total number of nodes but also have lower influnece in terms of fraction of nodes w.r.t to their groups sizes. In other words, as we are passing the group influences through a \textit{concave} function, the increase in the objective would be higher when under-represented groups are influenced, as demonstrated in figure~\ref{fig:illustration_concave}. 

\begin{figure}
\centering
\includegraphics[angle=0, width=0.7\linewidth]{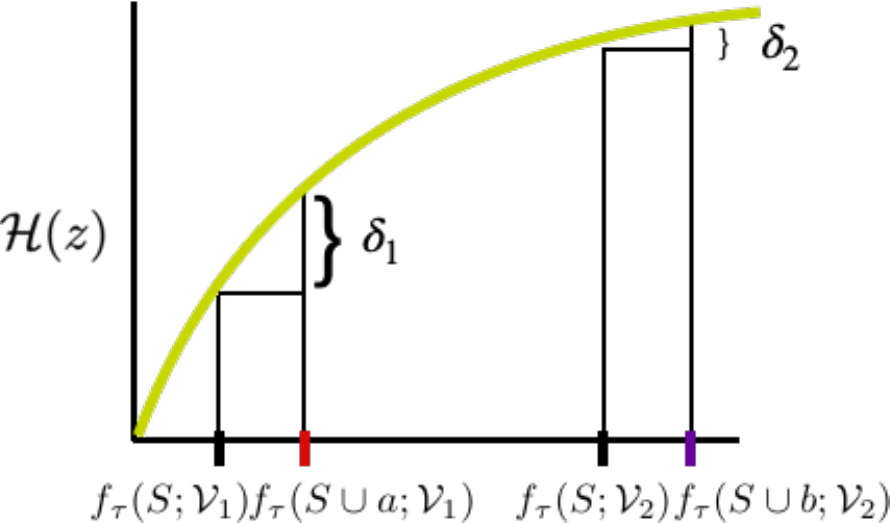}
\vspace{-1.0mm}
\caption{Demonstration of concave function encouraging picking seeds which influence under-represented group. X-axis represents group influence and y-axis represents the value of $\Hcal$ for the corresponding group influence. In this example we have two groups, $\Vcal_1$ and $\Vcal_2$. $\Vcal_1$ is under-influenced compared to $\Vcal_2$, using the seed set $S$. In the next iteration we have an option to either include node $a$ or $b$ in our seed set, both of which add the same amount of total influence. Adding node $a$ in our seed set influences $\Vcal_1$which is the under-influenced group, while adding node $b$ influences nodes from $\Vcal_2$, as demonstrated in the figure. The traditional method, given by problem~\ref{eq:budget_problem}, would treat both of these nodes as equally good. However, since we are passing the group influences through a concave function the increase in the value of $\Hcal(z)$ will be more if we pick node $a$, i.e., our method will pick node $a$ because $\delta_1 > \delta_2$.}
\label{fig:illustration_concave}
\vspace{-4.0mm}
\end{figure}

\xhdr{Trade-off between objectives} It is important to note that controlling the curvature of the concave function $\Hcal$  provides an indirect way to \textit{trade-off} between the two objectives, i.e.,  i) the total influence and ii) the disparity of the solution. For instance, using $\Hcal(z) := \log(z)$ has higher curvature than using $\Hcal(z) := \sqrt z$ and hence leads to lower disparity at the cost of lower total influence (this is demonstrated in the experimental results in Figure~\ref{fig:synth_budget_fairness}). For our illustrative example from Section~\ref{sec:notions}, we report the results for an optimal solution to \fairtcimbudget problem~\ref{eq:surrogate_fair_budget_problem} with $\Hcal(z) := \log(z)$. As can be seen in Figure~\ref{fig:example_graph}, the solution leads to a drastic reduction in disparity across groups for different values of deadline $\tau$ compared to an optimal solution of the standard \tcimbudget problem~\ref{eq:budget_problem} at the cost of reduction in total influence. So, if one wants to penalize disparity of influence more one can pick $\Hcal$ function with higher curvature but at the expense of potentially lower total influence.

While it is intuitively clear that using the concave function $\Hcal(z)$ in problem~\ref{eq:surrogate_fair_budget_problem} reduces disparity, we also need to ensure that the  solution to this problem has high influence for the whole population $\Vcal$ and that the solution can be computed efficiently. As proven in the theorem below,  we can find an approximate solution to problem~\ref{eq:surrogate_fair_budget_problem}, with guarantees on the total influence, by running the greedy heuristic (as was introduced in Section~\ref{sec:background:submodular}).

\begin{theorem} \label{theorem:fair_budget}
Let $\hat{S}$ denote  the output of the greedy algorithm for problem~\ref{eq:surrogate_fair_budget_problem}. Let $S^*$ be an optimal solution to problem~\ref{eq:budget_problem}.  Then, the total influence of the greedy algorithm is guaranteed to have the following lower bound: 
$f_\tau(\hat{S}; \Vcal, \Gcal) \geq (1 - \frac{1}{e})  \cdot \Hcal\big(f_\tau(S^*; \Vcal, \Gcal)\big)$.
\end{theorem}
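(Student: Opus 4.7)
The plan is to organise the proof around a short chain of inequalities linking the total-influence objective of~\ref{eq:budget_problem} with the surrogate objective $F(S) := \sum_{i=1}^k \Hcal\bigl(f_\tau(S;\Vcal_i,\Gcal)\bigr)$ that the greedy algorithm actually maximises. The enabling structural step is to verify that $F$ is non-negative, monotone, and submodular in $S$. For every group $i$, the target-set utility $f_\tau(\cdot;\Vcal_i,\Gcal)$ inherits these properties from the argument of \citet{chen2012time}, which does not use anything special about the choice of target set $\Vcal$; composing a non-negative, monotone, submodular function with a non-decreasing concave $\Hcal$ is a standard preservation fact, and summing such compositions preserves all three properties.

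Once $F$ is known to be monotone submodular, I would invoke the Nemhauser--Wolsey--Fisher $(1-1/e)$-guarantee for greedy maximisation of a monotone submodular function under a cardinality constraint (already used for \tcimbudget in Section~\ref{sec:background:submodular}) to obtain $F(\hat S) \geq (1-1/e)\,F(S^{*}_{F})$, where $S^{*}_{F}$ denotes any optimum of~\ref{eq:surrogate_fair_budget_problem}. Because $|S^{*}|\leq B$, the set $S^{*}$ is also feasible for~\ref{eq:surrogate_fair_budget_problem}, and hence $F(S^{*}_{F}) \geq F(S^{*})$. Subadditivity of $\Hcal$ at the origin---which follows from concavity together with $\Hcal(0)\geq 0$ via the two-point bound $\Hcal(a)\geq \tfrac{a}{a+b}\Hcal(a+b)+\tfrac{b}{a+b}\Hcal(0)$ and its analogue for $b$---then yields, after summing over the partition $\Vcal=\cup_i \Vcal_i$,
\[
F(S^{*})=\sum_{i=1}^{k}\Hcal\!\bigl(f_\tau(S^{*};\Vcal_i,\Gcal)\bigr)\;\geq\; \Hcal\!\Bigl(\sum_{i=1}^{k} f_\tau(S^{*};\Vcal_i,\Gcal)\Bigr)\;=\;\Hcal\!\bigl(f_\tau(S^{*};\Vcal,\Gcal)\bigr).
\]

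The final step converts the surrogate value at $\hat S$ into the true total influence by using $\Hcal(z)\leq z$ on the reachable range of per-group utilities (a property satisfied by the canonical choices $\Hcal(z)=\log z$ and $\Hcal(z)=\sqrt z$ discussed in Section~\ref{sec:algorithms:budget}): termwise,
\[
f_\tau(\hat S;\Vcal,\Gcal)=\sum_{i=1}^{k} f_\tau(\hat S;\Vcal_i,\Gcal)\;\geq\;\sum_{i=1}^{k}\Hcal\!\bigl(f_\tau(\hat S;\Vcal_i,\Gcal)\bigr)=F(\hat S),
\]
and chaining this with the two inequalities above gives the claimed bound. The step I expect to require the most care is not any single algebraic manipulation but the clean statement of the regularity conditions on $\Hcal$ under which every link of the chain is justified: monotone concavity with $\Hcal(0)\geq 0$ powers both the submodular-composition fact and the subadditivity inequality, while $\Hcal(z)\leq z$ on the relevant range of per-group influence values is precisely what lets us cash in $F(\hat S)$ as a valid lower bound on $f_\tau(\hat S;\Vcal,\Gcal)$.
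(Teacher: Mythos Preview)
Your proposal is correct and follows essentially the same route as the paper's proof sketch: establish that $F(S)=\sum_i\Hcal(f_\tau(S;\Vcal_i,\Gcal))$ is monotone submodular via the concave-of-submodular composition (the paper cites \cite{lin2011class} for this), apply the $(1-1/e)$ greedy guarantee, and then use concavity-based inequalities to pass between $F$ and $f_\tau$. Your write-up is in fact more explicit than the paper's sketch---you spell out the subadditivity step $\sum_i\Hcal(x_i)\geq\Hcal(\sum_i x_i)$ and the termwise bound $\Hcal(z)\leq z$, and you correctly flag that both hinge on regularity conditions on $\Hcal$ (namely $\Hcal(0)\geq 0$ and $\Hcal(z)\leq z$ on the relevant range) that the paper leaves implicit.
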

This is equivalent to the fact that the multiplicative approximation factor of the utility of \fairtcimbudget using greedy algorithm w.r.t. the utility of an optimal solution to \tcimbudget scales as $\big((1 - \frac{1}{e}) \cdot \frac{\Hcal(f_\tau(S^*; \Vcal, \Gcal))}{f_\tau(S^*; \Vcal, \Gcal)}\big)$. Note that as the curvature of the concave function $\Hcal$ increases, the approximation factor gets worse---this further highlights  how the curvature of the function $\Hcal$ provides a way to trade-off the total  influence and disparity of the solution. In the case of $\Hcal(z):= log(z)$, which penalizes the disparity of the solution quite severely due to high curvature, the bound on the total influence achieved by our solution is exponentially related to the optimal solution of  problem~\ref{eq:budget_problem} which does not consider fairness. On the other hand, if $\Hcal(z):= z$, i.e., $\Hcal$ is an identity function, the problem reverts back to problem~\ref{eq:budget_problem}, whose solution might have a higher total influence but could result in high disparity, as evidenced by our experimental results in sections~\ref{sec:eval_synth_budget} and \ref{sec:eval_real_budget}. One can pick $\Hcal$ with the appropriate curvature for the desired level of penalization of the disparity of influence at the cost of total influence. 
\textit{Due to lack of space, the proof of the theorem is included in the appendix.}

\vspace{-3mm}
\subsection{Fair TCIM-Cover}\label{sec:algorithms:cover}

\subsubsection{Fairness considerations in \tcimcover}
A fair \tcim algorithm under coverage constraint should seek to achieve the following two objectives: (i) minimizing the size of the seed set that achieves the desired coverage constraint as was done in the standard \tcimcover problem~\ref{eq:cover_problem}, and (ii) enforcing fairness by ensuring that disparity across different groups as per Eq.~\ref{eq:measure_fair} is low. As was the case for \fairtcimbudget problem above, enforcing fairness would lead to increasing the size of the required seed set, and we seek to design algorithms that can achieve a good trade-off between these two objectives.  We formulate a  fair variant of \tcimcover problem~\ref{eq:cover_problem} that captures this trade-off as follows:
\vspace{-1mm}
\begin{align*}\label{eq:fair_cover_problem}
&\min_{S \subseteq \Vcal} \underbrace{|S|}_\text{Minimize seed set size} \tag{P5}\\
&\text{subject to }  \underbrace{\frac{\sum_i^k f_{\tau}(S; \Vcal_i, \Gcal)}{|\Vcal|} \geq Q}_\text{Bound fraction of influenced node}, \\
&\text{and } \underbrace{\max_{i,j} \Big| \frac{f_\tau(S;\Vcal_{i},\Gcal)}{|\Vcal_{i}|} -  \frac{f_\tau(S;\Vcal_{j},\Gcal)}{|\Vcal_{j}|} \Big| \le c}_\text{Minimize disparity}
\end{align*}
where $c \in [0,1]$ is a hyperparameter, which determines the amount of disparity that is allowed. As in the case of problem~\ref{eq:fair_budget_problem}, it is possible that for some values of $c$ the problem is infeasible. Problem~\ref{eq:fair_cover_problem} has three objectives: i) \textbf{ minimizing size of seed set} that ii) \textbf{influences a prescribed quota} of the population while ii) \textbf{minimizing disparity in the influence} among the groups. 

As in Section~\ref{sec:algorithms:budget}, we note that problem~\ref{eq:fair_cover_problem} is a challenging discrete optimization problem and does not have structural properties as was the case for the standard \tcimcover problem~\ref{eq:cover_problem}.

\vspace{-1mm}
\subsubsection{Surrogate \fairtcimcover with guarantees}
\begin{figure}
\centering
\includegraphics[angle=0, width=0.7\linewidth]{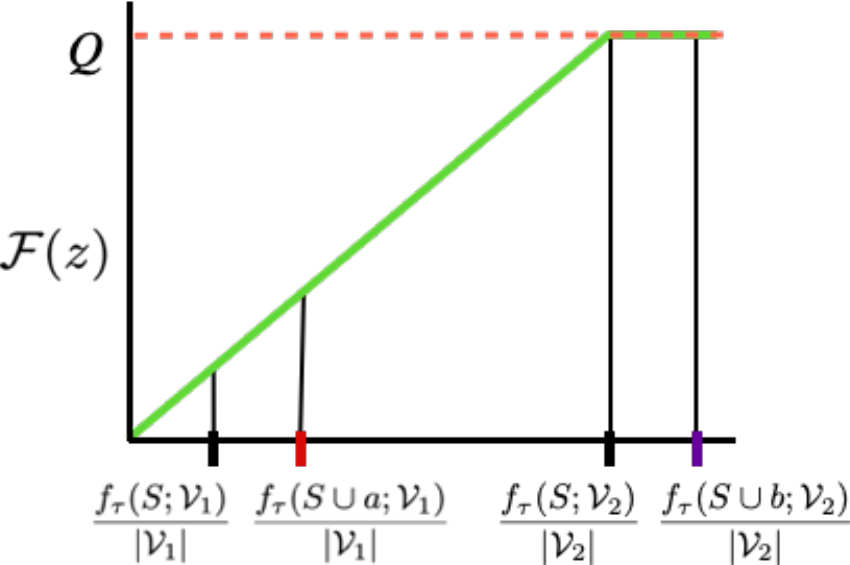}
\vspace{-3.0mm}
\caption{ where $\Fcal(z) = \min\bigg\{\frac{f_{\tau}(S; \Vcal_i, \Gcal)}{|\Vcal_i|}, Q\bigg\} $. Demonstration of the constraint in problem~\ref{eq:surrogate_fair_cover_problem}. X-axis represents the fraction of group influences and y-axis represents the value of per group constraint in problem~\ref{eq:surrogate_fair_cover_problem} for the corresponding group influence. In this example we have two groups, $\Vcal_1$ and $\Vcal_2$ of roughly same size. $\Vcal_1$ has not reached the prescribed quota, $Q$, while $\Vcal_2$ has already been influenced up to the prescribed quota. In the next iteration we have an option to either include node $a$ or node $b$ in our seed set, both of which add the same amount of total influence. Adding node $a$ in our seed set influences only $\Vcal_1$, while adding node $b$ influences nodes from only $\Vcal_2$, as demonstrated in the figure. The traditional method, problem~\ref{eq:cover_problem}, would treat both of these nodes as equally good candidates for including in the seed set because they add equal fraction of total influence. However, since we require all the groups to be influenced up to the required quota, selecting node $a$ will increase our constraint value, $\Fcal(z)$, while by selecting node $b$ the constraint value would stay the same as $\Vcal_2$ has already reached the required quota of influence.}
\label{fig:illustration_cover}
\vspace{-5.0mm}
\end{figure}

Instead of directly solving problem~\ref{eq:fair_cover_problem}, we introduce a novel surrogate problem that indirectly trade-offs the two objectives  of minimizing  the size of selected seed set and minimizing disparity, as follows:
\begin{align*}\label{eq:surrogate_fair_cover_problem}
&\min_{S \subseteq \Vcal} |S| 
\quad \text{ subject to }  \frac{f_{\tau}(S; \Vcal_i, \Gcal)}{|\Vcal_i|}  \geq Q \; \; \; \forall i.\tag{P6}
\end{align*}

Optimizing problem~\ref{eq:surrogate_fair_cover_problem} addresses all the objectives of problem~\ref{eq:fair_cover_problem} by i) \textbf{minimizing the seed set size}, ii) which \textbf{influences all the groups up to the prescribed quota}, $Q$. iii) Thereby, \textbf{disparity} of the feasible solution is \textbf{bounded} by $(1-Q)$. The key idea of using the surrogate objective function in problem~\ref{eq:surrogate_fair_cover_problem} is the following: the problem has a constraint that enforces that at least $Q$ fraction of nodes in each group are influenced by the selected seed set $S$; this in turn directly provides a bound on the disparity of any feasible solution to the problem as $(1 - Q)$. Figure~\ref{fig:illustration_cover} provides a demonstration of the constraints we propose.

While it is intuitively clear that the solution to  problem~\ref{eq:surrogate_fair_cover_problem} reduces disparity, we also would like to bound the size of the final seed set and that the solution can be computed efficiently. As proven in the theorem below,  we can find an approximate solution to problem~\ref{eq:surrogate_fair_cover_problem}, with guarantees on the final seed set size,  by running the greedy heuristic (as was introduced in Section~\ref{sec:background:submodular}).

\begin{theorem} \label{theorem:fair_cover}
Let us denote the output of the greedy algorithm for problem~\ref{eq:surrogate_fair_cover_problem} by set $\hat{S}$.  For group $i \in \{1, \ldots, k\}$, let $S^*_i$ denote an optimal solution to the coverage problem~\ref{eq:cover_problem} for the target nodes set to  $\Vcal_i$, i.e., solving problem~\ref{eq:cover_problem} with constraint given by $\frac{f_{\tau}(S; \Vcal_i, \Gcal)}{|\Vcal_i|} \geq Q$.  Then, the size of the seed set $\hat{S}$ returned by the greedy algorithm is guaranteed to have the following upper bound: $|\hat{S}| \leq \ln(1+ |\Vcal|) \Big(\sum_{i=1}^{k} |S^*_i|\Big)$.
\end{theorem}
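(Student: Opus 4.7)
My plan is to reduce the claimed bound to the standard greedy submodular-cover guarantee from Section~\ref{sec:background:submodular}, applied once to the surrogate constraint function of problem~\ref{eq:surrogate_fair_cover_problem}. The two preparatory tasks will be to verify that this constraint function is monotone submodular, and to bound the size of an optimal solution of~\ref{eq:surrogate_fair_cover_problem} by $\sum_i |S^*_i|$.

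The first step I would take is to show that
\[
g(S) \;:=\; \sum_{i=1}^{k} \min\!\left\{\frac{f_\tau(S;\Vcal_i,\Gcal)}{|\Vcal_i|},\; Q\right\}
\]
is a non-negative, monotone, submodular set function with $g(\emptyset)=0$ and $g(\Vcal)\le kQ$. I would chain three standard facts: $f_\tau(\cdot\,;\Vcal_i,\Gcal)$ is non-negative, monotone and submodular (as noted via \citet{chen2012time} in Section~\ref{sec:background:submodular}); scaling by the positive constant $1/|\Vcal_i|$ preserves these properties; and both truncation $\min\{\cdot,Q\}$ by a non-negative constant and non-negative summation preserve monotonicity and submodularity. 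Because each term of $g$ is capped at $Q$, the constraint $g(S)\ge kQ$ is in fact equivalent to saying that every group $i$ individually achieves normalized coverage at least $Q$.

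The second step is to bound the optimum of~\ref{eq:surrogate_fair_cover_problem}. Let $\bar S := \bigcup_{i=1}^{k} S^*_i$. By monotonicity of $f_\tau(\cdot\,;\Vcal_i,\Gcal)$ and $S^*_i\subseteq\bar S$,
\[
\frac{f_\tau(\bar S;\Vcal_i,\Gcal)}{|\Vcal_i|} \;\ge\; \frac{f_\tau(S^*_i;\Vcal_i,\Gcal)}{|\Vcal_i|} \;\ge\; Q
\]
for every $i$, so $\bar S$ is feasible for~\ref{eq:surrogate_fair_cover_problem}. Writing $S^\dagger$ for an optimal solution of~\ref{eq:surrogate_fair_cover_problem}, this yields $|S^\dagger| \le |\bar S| \le \sum_{i=1}^{k}|S^*_i|$.

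For the final step I would invoke the greedy submodular-cover guarantee recalled in Section~\ref{sec:background:submodular} on $g$: since the algorithm producing $\hat S$ is exactly the standard cover-greedy on the monotone submodular $g$, arguing as for the original \tcimcover bound (by writing $g$ as an expectation over live-edge realisations of the \ic model and bounding per-realisation contributions by $|\Vcal|$), one obtains $|\hat S| \le \ln(1+|\Vcal|)\cdot|S^\dagger|$. Chaining with the previous inequality then yields $|\hat S| \le \ln(1+|\Vcal|)\sum_{i=1}^{k}|S^*_i|$. I expect the most delicate point to be this last step: one must verify that the truncation $\min\{\cdot,Q\}$ and the non-uniform scaling by $1/|\Vcal_i|$ used to form $g$ do not degrade the logarithmic factor beyond $\ln(1+|\Vcal|)$. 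The cleanest route is probably to work with the equivalent unnormalised surrogate $\tilde g(S) := \sum_i \min\{f_\tau(S;\Vcal_i,\Gcal),\,Q|\Vcal_i|\}$, whose greedy trajectory coincides with that on $g$, and apply Wolsey's integer form of the submodular-cover bound so that the harmonic-number factor $H(\tilde g(\Vcal))$ is controlled by $\ln(1+|\Vcal|)$.
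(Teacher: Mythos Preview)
Your proposal is correct and follows essentially the same two-step structure as the paper's proof: (i) establish that the constraint function in~\ref{eq:surrogate_fair_cover_problem} is monotone submodular via composition properties, and (ii) upper-bound its optimum by $\sum_{i=1}^{k}|S^*_i|$ using the feasible union $\bar S=\bigcup_i S^*_i$, then invoke the greedy submodular-cover guarantee. Your treatment is in fact more careful than the paper's sketch, in that you explicitly flag the issue of preserving the $\ln(1+|\Vcal|)$ factor under truncation and non-uniform scaling and suggest handling it via the unnormalised surrogate $\tilde g$ and Wolsey's integer bound; the paper's sketch leaves this step implicit.
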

\textit{Due to lack of space, the proof of the theorem is included in the appendix.}

\begin{figure*}[t]
 \centering
            \subfloat[Total and group influence]
    {
        \includegraphics[angle=0, width=0.28\linewidth]{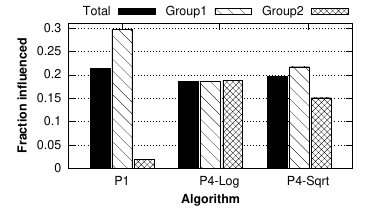}
    \label{fig:synth_budget_fairness}
    }
            \subfloat[Varying budget B]
    {
        \includegraphics[angle=0, width=0.28\linewidth]{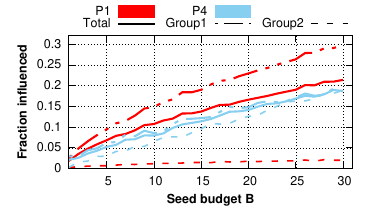}
    \label{fig:synth_budget_cost}
    }
            \subfloat[Varying time deadline $\tau$]
    {
       \includegraphics[angle=0, width=0.28\linewidth]{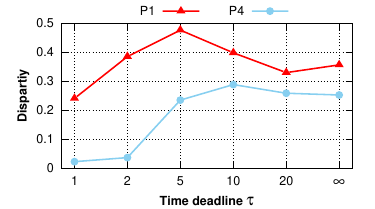}
    \label{fig:synth_budget_deadline}
    }
       \vspace{-1.0mm}    
    \caption{
    [Synthetic Dataset: Budget Problem]     The figures show that solving \tcimbudget problem~\ref{eq:budget_problem} can lead to disparity in number of influenced nodes belonging to different groups, while \fairtcimbudget problem~\ref{eq:surrogate_fair_budget_problem} fares better in terms of achieving parity of influence, with marginally lower total influence. See Section~\ref{sec:eval_synth_budget} for further details.}
    \vspace{-4.0mm}
   \label{fig:synth_budget_1}
\end{figure*}

\vspace{-4mm}
\section{Evaluation on Synthetic Datasets}
\vspace{-1mm}
\label{sec:eval_synth}
In this section, we compare the solutions of different problems on several synthetic datasets. We show that the disparity in influence is affected by varying different properties of the graphs and parameters of the algorithms. 
\vspace{-2mm}
\subsection{Dataset and Experimental Setup}
\label{sec:eval_synth_setup}
First we discuss how we generated the synthetic datasets and then the setup used in our experiments.

\xhdr{Synthetic datasets} We consider stochastic block model to generate the synthetic datasets, particularly we consider an undirected graph with $500$ nodes, where each node belongs to either group $\Vcal_1$ or group $\Vcal_2$. The fraction of nodes belonging to each group is determined by a parameter $g$ (e.g., setting $g= 0.7$ results in $70\%$ of the nodes to be randomly assigned to group $\Vcal_1$). Nodes are connected based on two probabilities: (i) within-group edge probability ({\it Homophily}) $p_{hom}$  and (ii) across-group edge probability ({\it Heterophily}) $p_{het}$. Placing an edge between two nodes goes as follows: given a pair of nodes $(v,w)$, if they belong to the same group, we perform a Bernoulli trial with parameter $p_{hom}$; otherwise we use the parameter $p_{het}$. If the outcome of the trial is $1$, we place an undirected edge $e$ between these two nodes. Each edge has a probability of activation, $p_{e} \in [0,1]$, with which the nodes can activate each other.

\xhdr{Experimental Setup} In our experiments we varied all the aforementioned properties of the graph. We vary each of these graph and algorithmic properties while rest of the properties are set to a default value. We experimented with several default values but as an illustration we include the results for the following default values: $g = 0.7$ yielding $350$ nodes in $\Vcal_1$ and $150$ nodes in $\Vcal_2$. We set  $p_{hom} = 0.025$ and $p_{het} = 0.001$, which yielded $3606$ total edges, out of which $2965$ edges were within group $\Vcal_1$, $514$ within $\Vcal_2$, and $127$ edges connecting nodes across two groups. We used a constant activation probability on all edges given by $p_e = 0.05$. Finally, we consider the time deadline $\tau = 20$, unless explicitly stated otherwise. 
Evaluating utilities, as described in Eq.~\ref{eq:time_critical_utility}, in closed form is intractable, so we used Monte Carlo sampling to estimate these utilities. We used $200$ samples for this estimation, which yielded a stable estimation of the utility function. 
In all the experiments, we pick a seed set by solving the corresponding problem. Then, we use this seed set to estimate the expected number of nodes influenced in the graph using \tcim. We report the following normalized utilities: $\frac{f(S; \Vcal, \Gcal)}{|\Vcal|}$ for the whole population $\Vcal$, $\frac{f(S; \Vcal_1, \Gcal)}{|\Vcal_1|}$ for the group $\Vcal_1$, and $\frac{f(S; \Vcal_2, \Gcal)}{|\Vcal_2|}$ for the group $\Vcal_2$.

\begin{figure*}[t]
 \centering
            \subfloat[Varying influence probability $p_e$]
    {
        \includegraphics[angle=0, width=0.28\linewidth]{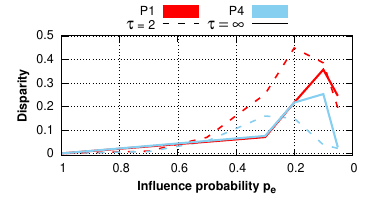}    
    \label{fig:synth_budget_activation}
    }
            \subfloat[Varying group size proportions]
    {
        \includegraphics[angle=0, width=0.28\linewidth]{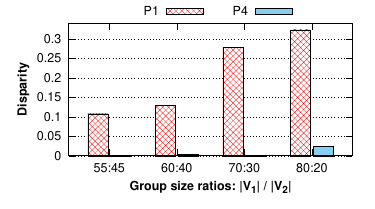}
    \label{fig:synth_budget_groups} 
        }
            \subfloat[Varying cliquishness]
    {
        \includegraphics[angle=0, width=0.28\linewidth]{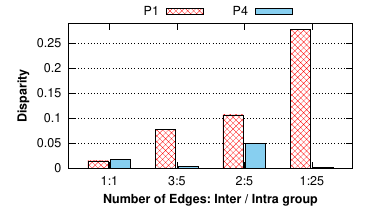}    
    \label{fig:synth_budget_clique}
    }
        \vspace{-1.0mm}
    \caption{
    [Synthetic Dataset: Budget Problem] These figures demonstrate that lower activation probabilities, uneven group sizes, and cliquishness can lead to higher disparity of influence between different groups with \tcimbudget problem~\ref{eq:budget_problem}. In comparison our proposed method, \fairtcimbudget given by problem~\ref{eq:surrogate_fair_budget_problem}, leads to solutions which yield lower disparity. For further details, see Section~\ref{sec:eval_synth_budget}.
    }
    \vspace{-4.0mm}
    \label{fig:synth_budget_2}
    \end{figure*}

\vspace{-3.1mm}
\subsection{TCIM under Budget Constraints}
\label{sec:eval_synth_budget}
Next, we compare the solutions of \tcimbudget problem~\ref{eq:budget_problem} with our solution to \fairtcimbudget problem~\ref{eq:surrogate_fair_budget_problem}, obtained through the greedy algorithm,~\ie, by iteratively picking $B$ seeds which yield maximum marginal gain.
 In all the figures discussed in this section, red color represents the results of \tcimbudget problem \ref{eq:budget_problem}, and blue color represents the results of our solution to the \fairtcimbudget problem \ref{eq:surrogate_fair_budget_problem}. For the experiments in this section, we used a budget of $B = 30$ seeds.
\vspace{-2mm}
\subsubsection{Varying algorithmic properties}\label{sec:algo_prop}
In this section, we vary several properties of the influence maximization algorithm and answer following questions: \\
--- \textbf{Q1}: How does the choice of $\Hcal(z)$ with different curvatures affect disparity and total influence? \\
--- \textbf{Q2}: How does varying seed budget affect disparity? \\
--- \textbf{Q3}: How does varying time deadline affect disparity? \\
--- \textbf{Q4}: How does varying activation probabilities on the edges affect disparity? \\
--- \textbf{Q5}: How effective is our method in reducing disparity? \\
--- \textbf{Q6}: How much cost does our method incur?

\vspace{1mm} \noindent 
\xhdr{[Q1, Q5, Q6] Effect of different $\Hcal(z)$}
Figure~\ref{fig:synth_budget_fairness} presents the comparison of three algorithms: one solving \tcimbudget problem~\ref{eq:budget_problem}, using the greedy heuristic; the other two solving \fairtcimbudget problem~\ref{eq:surrogate_fair_budget_problem}, using two realizations of the concave monotone function, $\Hcal(z)$, given by: (i) $\Hcal(z):= \log(z)$ and (ii) $\Hcal(z):= \sqrt{z}$. Figure~\ref{fig:synth_budget_fairness} shows the fraction of population influenced, both overall and for every group. We can observe that solving the traditional \tcimbudget problem leads to large disparity between the fraction of nodes influenced from each group: while $30\%$ of nodes in group $\Vcal_1$ are influenced, this fraction is only $2\%$ for group $\Vcal_2$. 

On the other hand, our proposed solution to \fairtcimbudget problem results in lower disparity
between the groups, ensuring similar fraction of influenced nodes. We can further see that $\sqrt{z}$, with lower curvature, performs worse than $\log(z)$ in removing the disparity, however incurring lower loss in total influence, as guaranteed by our theoretical results in Theorem~\ref{theorem:fair_budget}. One could consider higher powers of the root to increase the curvature or increase the weights $\lambda$ in problem~\ref{eq:surrogate_fair_budget_problem} for the under-represented group. The \textit{key points} are: i) $\Hcal(z)$ with higher curvature results in lower disparity of influence at the expense of lower total influence. ii) \fairtcimbudget problem results in lower disparity and ii)  the reduction in the total influence is only marginal as guaranteed by Theorem~\ref{theorem:fair_budget}. In the subsequent figures, we only show the results of $\Hcal(z):= \log(z)$ for the solution to problem~\ref{eq:surrogate_fair_budget_problem}. \\\xhdr{[Q2, Q5, Q6] Effect of seed budget}
Figure~\ref{fig:synth_budget_cost} shows the effect of different seed budgets on the number of influenced nodes (from different groups). Dotted and dash-dotted lines correspond to groups $\Vcal_2$ and $\Vcal_1$ respectively, while solid lines represent the total influence. The figure demonstrates that: (i) Disparity in the utility between both the groups increases with the increase in allowed seed budget. A reason for these differences could be the imbalances in groups sizes and average degrees, between both the groups--- $\Vcal_1$ and $\Vcal_2$ comprise $70\%$ and $30\%$ of the nodes respectively. If a very big seed budget is allowed the disparity in influence might also reduce, however in many applications, due to limited resources, it is not practical to have a big budget; (ii) \fairtcimbudget problem results in a lower disparate utility between the two groups compared to \tcimbudget problem; (iii) this reduction in disparity is achieved at a very low cost to the total influence, as guaranteed by Theorem~\ref{theorem:fair_budget}.

\xhdr{[Q3, Q5] Effect of deadline}
Figure~\ref{fig:synth_budget_deadline} compares disparity in the solutions of problems~\ref{eq:budget_problem} and \ref{eq:surrogate_fair_budget_problem} as we vary the value of the deadline $\tau$. Disparity is computed as the absolute difference between the fraction of individuals influenced in each group, given by Eq.~\ref{eq:measure_fair}. The figure demonstrates that: (i) disparity in group utilities does not have a unidirectional trend with increasing time deadline $\tau$. One explanation for the increasing disparity--- for $\tau = \{1,2,5\}$, could be that the seed nodes or the most influential nodes are propagating influence in \textit{both the groups}, but as we increase the time deadline, Group $V_1$, with more nodes and edges, is more efficient at propagating influence compared to Group $V_2$, so it results in a larger disparity. But, after a threshold of increase in $\tau$ both groups are being influenced because longer cascades are allowed. Hence the disparity lowers and then plateaus, for $\tau = \{5,10,20, \infty \}$. One could imagine a case, as shown in the motivating example in Figure~\ref{fig:example_graph}, where seed nodes are surrounded by nodes of \textit{only one group}, in this case increasing time deadline could yield a lower disparity. (ii) Our proposed method, given by problem~\ref{eq:surrogate_fair_budget_problem}, yields solutions which result in much lower disparity.
\begin{figure*}[t]
 \centering
            \subfloat[Greedy selection iterations]
    {
        \includegraphics[angle=0, width=0.28\linewidth]{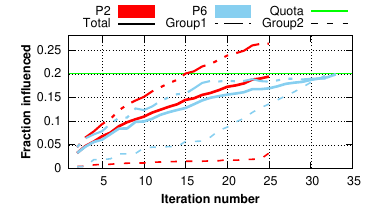}    
    \label{fig:synth_cover_classic_v_fair}
    }
            \subfloat[Varying quota Q: Influence]
    {
        \includegraphics[angle=0, width=0.28\linewidth]{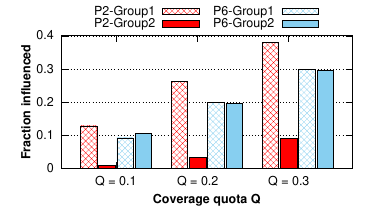}    
        \label{fig:synth_cover_fairness}
    }
            \subfloat[Varying quota Q: Solution set size $|S|$]
    {
        \includegraphics[angle=0, width=0.28\linewidth]{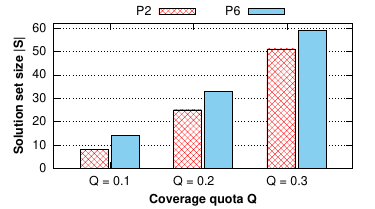}    
        \label{fig:synth_cover_cost}
    }
        \vspace{-1.0mm}
    \caption{
    [Synthetic Dataset: Cover Problem] These figures show a comparison of \tcimcover problem ~\ref{eq:cover_problem}, in red, and \fairtcimcover problem~\ref{eq:surrogate_fair_cover_problem}, in blue. They show that \fairtcimcover achieves lower disparity of influence between different groups     with slightly bigger solution set sizes. See Section~\ref{sec:eval_synth_cover} for further details. 
    }
    \vspace{-4.0mm}
    \label{fig:synth_cover}
    \end{figure*}

\xhdr{[Q4, Q5] Effect of activation probabilities}
Figure~\ref{fig:synth_budget_activation} shows the disparity in influence for different activation probabilities $p_e\in\{0.01,0.05,0.1,$
$0.2, 0.3,0.5,0.7,1.0\}$. The results show that: i) lower activation probabilities could result in larger disparity. This makes intuitive sense, since with lower activation probabilities less nodes have a chance to be influenced. We are using an imbalanced graph, both in terms of group sizes and within and across group connectivity. It is very likely that the seeds selected might belong to the majority group and will have more connections to the nodes from their own group. With low activation probabilities less number of nodes are expected to be influence and the biases in the graph structure would become more pronounced, as evidenced by the results. With the high activation probabilities more number of nodes are expected to be influenced so the disparity in the influence is lower, as demonstrated by the results. ii)  Lower values of $\tau$ tend to have a higher disparity compared to the higher values of $\tau$. The intuition presented in the previous paragraph is confirmed with this experiment. iii) Our method consistently results in a lower disparity. The difference in disparities resulting from the solution of our method compared to the solution of traditional method in more pronounced for lower activation probabilities.

\vspace{-3.3mm}
\subsubsection{Varying graph properties}\label{sec:graph_prop}
In this section, we vary several graph properties and answer following evaluation questions: \\
--- \textbf{Q1}: How does varying group sizes affect the disparity? \\
--- \textbf{Q2}: How does varying connectivity among the groups affect the disparity? \\
--- \textbf{Q3}: How effective is our method in reducing disparity? \\
\xhdr{[Q1, Q3] Effect of group sizes} Figure~\ref{fig:synth_budget_groups} shows the effect of group sizes $g \in \{0.55, 0.6, 0.7,0.8\}$. x-axis represents ratio of the nodes belonging to the two groups and y-axis represents disparity. i) The figure confirms our hypothesis that \textit{imbalance in a graph could lead to disparate influence}, as motivated in the illustrative example given in Figure~\ref{fig:example_graph}. Since we are considering a $1:25$ of $p_{het}:p_{hom}$, i.e., across vs within group edge probability ratios, even slight imbalance in the group sizes could result in a high disparity. The seed nodes or influential nodes are more likely to be from the dominant group and are more likely to be connected with nodes from their own groups. ii) On the other hand our proposed method results in almost no or very little disparity of influence, as it encourages to pick seeds which influence under-represented group. \\
\vspace{1mm}\noindent
\xhdr{[Q2, Q3] Effect of graph connectivity}
Figure ~\ref{fig:synth_budget_clique} demonstrates the importance of the graph structure, particularity connectivity between the two groups, characterized by $(p_{het},p_{hom})\in \{(0.025, 0.025),(0.015, 0.025),$
$(0.01,0.025), (0.001,0.025)\}$. x-axis shows the ratio of across and within group edge probabilities. i) The figure validates our hypothesis that the majority group containing more influential nodes fares better in \tcimbudget problem, as proposed in figure~\ref{fig:example_graph}. Groups $\Vcal_1$ and $\Vcal_2$ comprise $70\%$ and $30\%$ of the nodes, respectively. As we increase the group-preferential attachment, represented by x-axis of figure~\ref{fig:synth_budget_clique}, influential nodes are more likely to have connections within the group $\Vcal_1$, which in turn results in disparate influence propagation. ii) However, our proposed method performs better because it gives less weight to the nodes influenced from the majority group compared to the minority. Hence, our method encourages picking seed nodes which will influence the minority group, as explained in figure~\ref{fig:illustration_concave}. 

\vspace{1mm}
\xhdr{Takeaways} In this section we demonstrated that: (i) solving \tcimbudget problem can lead to disparity of influence in different groups; (ii) the amount of disparity depends on the time deadline, activation probability, relative group sizes, budget, and connectivity of the graph; and (iii) instead, solving \fairtcimbudget results in lower disparity of influence, with marginal reduction in overall influence, as guaranteed by Theorem~\ref{theorem:fair_budget}. 

\vspace{-4.0mm}
\subsection{TCIM under Coverage Constraints}
\label{sec:eval_synth_cover}
Next, we compare solutions of \tcimcover problem \ref{eq:cover_problem}, and our solution to \fairtcimcover problem \ref{eq:surrogate_fair_cover_problem}. We solve both the problems using the greedy algorithm, i.e., iteratively picking seeds which maximize the constraints of problems \ref{eq:cover_problem} and \ref{eq:surrogate_fair_cover_problem} until the required quota is reached. The goal is to reach the prescribed quota $Q$, with minimum number of seeds. In all the figures discussed in this section, red color represents the results of  \tcimcover problem \ref{eq:cover_problem}, and blue color represents the results of our solution to \fairtcimcover problem \ref{eq:surrogate_fair_cover_problem}. We answer the following question in this section: \\
--- \textbf{Q1}: How does our method fare compared to the traditional method over the \textit{iterations of the algorithm}? \\
--- \textbf{Q2}: How effective is our method in reducing disparity for different \textit{reach quotas}? \\
--- \textbf{Q3}: How much cost does our method incur? \\
\vspace{1mm}
\xhdr{[Q1] Effect of iterations}
Figure \ref{fig:synth_cover_classic_v_fair} shows how the fraction of population influenced changes with seed selection at each iteration. Solid lines represent total influence while dash-dotted lines and dotted lines represent groups $\Vcal_1$ and $\Vcal_2$, respectively. In this experiment, $Q$ was set to $0.2$ which is represented by the horizontal green line. The figure demonstrates that: (i) both methods reach the required quota of the population; (ii) however, only the solution set of \fairtcimcover problem~\ref{eq:surrogate_fair_cover_problem} reaches the required quota in both the groups; (iii) while maintaining roughly similar utility for both the groups throughout the iterations; (iv) and it does so \textit{at a small expense of additional seeds}, as guaranteed in Theorem~\ref{theorem:fair_cover}.\\

\begin{figure*}[t]
 \centering
            \subfloat[Total and Group Influence]
    {
      \includegraphics[angle=0, width=0.28\linewidth]{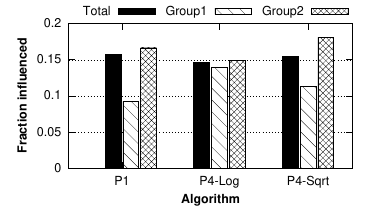} 
    \label{fig:rice_budget_fairness}
    }
            \subfloat[Varying budget B]
    {
      \includegraphics[angle=0, width=0.28\linewidth]{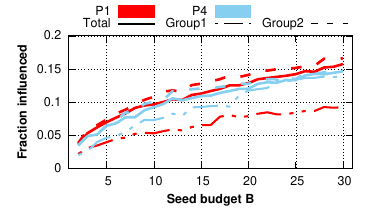} 
    \label{fig:rice_budget_cost}
    }
            \subfloat[Varying time deadline $\tau$]
    {
      \includegraphics[angle=0, width=0.28\linewidth]{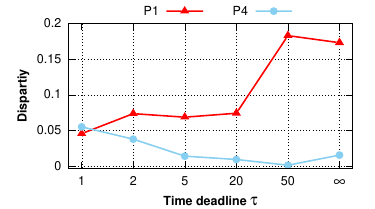} 
    \label{fig:rice_budget_deadline}
    }
        \vspace{-1.0mm}
    \caption{
                [Rice-Facebook Dataset: Budget Problem] Comparison of results solving \tcimbudget problem~\ref{eq:budget_problem} and \fairtcimbudget ~\ref{eq:surrogate_fair_budget_problem}. We experimented with 4 groups and total influence includes all the groups, but we show group influences and disparity for only two groups which showed the maximum disparity. The results demonstrate that our method, given by problem~\ref{eq:surrogate_fair_budget_problem}, yields seed set which propagate influence in a more fair manner, at the cost of a marginally lower total influence. See Section~\ref{sec:eval_real_budget} for further details. 
    }
    \vspace{-4.0mm}
        \label{fig:rice_budget}
    \end{figure*}
\vspace{-4mm}
\xhdr{[Q2, Q3] Effect of quota Q}
Figure~\ref{fig:synth_cover_fairness} shows fractions of individuals that are influenced for different quota $Q$: (i) for different values of the required quota, traditional method given by problem~\ref{eq:cover_problem} results in disparate utility between both the groups which is most likely due imbalance in group sizes and connectivity. (ii) Seeds selected by solving problem~\ref{eq:surrogate_fair_cover_problem} result in a more equal utility because our method explicitly requires every group to be influence up to quota $Q$. Depending on the graph structure, our method could result in a disparity up to $1-Q$. The objective in the constraint given in problem~\ref{eq:surrogate_fair_cover_problem} \textit{only} increases if nodes belonging to the groups are influenced which have not reached the required quota, as demonstrated in figure~\ref{fig:illustration_cover}. A higher disparity between groups could occur when it is not possible to influence the under-influenced group without influencing the already over-influenced group.  In practice a higher disparity could occur, e.g, if one of the groups is very small and very sparsely connected within the group, which is unlikely to occur in practice. (iii) \fairtcimcover problem~\ref{eq:surrogate_fair_cover_problem} uses only a small number of additional seeds, as guaranteed by Theorem~\ref{theorem:fair_cover}.

\xhdr{Takeaways} We compared the result of \tcimcover problem \ref{eq:cover_problem} and our solution to \fairtcimcover problem \ref{eq:surrogate_fair_cover_problem}. The results show that: (i) both methods reach the same fraction of the population; (ii) however, only \fairtcimcover problem results in seed sets influencing the required quota in \textit{all the groups} and results in \textit{a very low disparity} between groups; and (iii) lastly, \fairtcimcover yields \textit{only} slightly larger solution sets as guaranteed by Theorem~\ref{theorem:fair_cover}.

\begin{figure*}[t]
 \centering
            \subfloat[Greedy selection iterations]
    {
        \includegraphics[angle=0, width=0.28\linewidth]{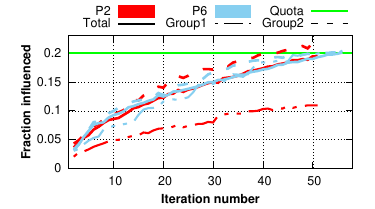} 
    \label{fig:rice_cover_classic_v_fair}
    }
            \subfloat[Varying quota Q: Influence]
    {
        \includegraphics[angle=0, width=0.28\linewidth]{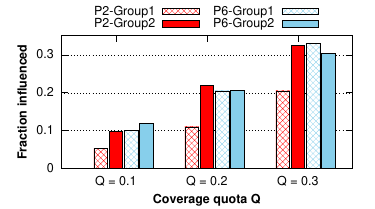} 
    \label{fig:rice_cover_fairness}
    }
            \subfloat[Varying Q: Solution set size $|S|$]
    {
        \includegraphics[angle=0, width=0.28\linewidth]{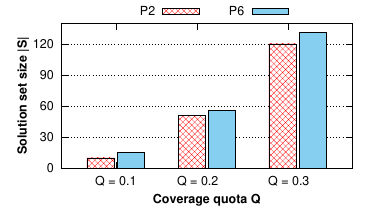} 
    \label{fig:rice_cover_cost}
    }
        \vspace{-1.0mm}
    \caption{
    [Rice-Facebook Dataset: Cover Problem] These figures demonstrate the results of \tcimcover problem~\ref{eq:cover_problem}, in red, and \fairtcimcover problem~\ref{eq:surrogate_fair_cover_problem}, in blue. We experimented with 4 groups and total influence includes all the groups but we show group influences for the two groups which had maximum disparity. The results show that our method achieves a more equal coverage for all the groups at the expense of only slightly larger seed sets. See Section~\ref{sec:eval_real_cover} for further details. 
      }
    \vspace{-4.0mm}
        \label{fig:rice_cover}
    \end{figure*}
\vspace{-2mm}
\section{Experiments on Real-World Datasets}
\label{sec:eval_real}
In this section, we evaluate our proposed solutions using two real-world datasets. We describe the datasets and the details of the experiments, and then present our findings. 

\vspace{-3mm}
\subsection{Dataset and Experimental Setup}
\label{sec:eval_real_setup}
Next, we describe the datasets we used to evaluate our proposed methods, followed by the experimental setup. 

\vspace{1mm}
\xhdr{Rice-Facebook dataset} To evaluate our proposed methods, we used \textit{Rice-Facebook} dataset collected by~\cite{mislove2010you},  where they capture the connections between students at the Rice University. The resulting network consists of $1205$ nodes and $42443$ undirected edges. Each node has $3$ attributes: (i) the residential college id (a number between $[1-9]$), (ii) age (a number between $[18-22]$), and (iii) a major ID (which is in the range $[1-60]$). 

We grouped the nodes (students) into four groups based on their age attributes. We experimented with all four groups while running our algorithms but present the results using only 2 groups which showed the \textit{highest disparity}.
We considered nodes with ages $18$ and $19$ as group $\Vcal_1$ and age $20$ as group $\Vcal_2$. Group $\Vcal_1$ has $97$ nodes and $513$ within-group edges. Whereas, group $\Vcal_2$ has $344$ nodes and $7441$ within-group edges. Overall, there are $3350$ across-group edges going between nodes in $\Vcal_1$ and $\Vcal_2$. 

\xhdr{Instagram-Activities dataset} This dataset was gathered by
\cite{stoica2018algorithmic}. It comprises $553628$ nodes and $652830$ undirected edges. The nodes represent a subset of Instagram users. There exists an edge between two nodes if either of them have liked or commented on each other's photos. Each node has a binary-valued gender attribute, i.e., male or female. $45.5\%$ of the nodes belong to the male group. There are $179668$ within-group edge among males and $201083$ within-group edges among females, while there are $136039$ across-group edges. 

\vspace{1mm}
\xhdr{Experimental Setup} In all the experiments using \textit{Rice-Facebook} dataset, we show the results for activation probability $p_e = 0.01$. All the other parameter were the same as described in section~\ref{sec:eval_synth_setup}. For experiments using \textit{Instagram-Activities} dataset we show the results with activation probability $p_e = 0.06$, time deadline $\tau = 2$, reach quota $Q = \{0.0015, 0.002\}$ and seed budget $B = 30$.  We also experimented with other values of these parameters and get similar results. For \textit{Instagram-Activities} we restrict the seeds to be picked from $5000$ randomly selected nodes from the graph. However the influence was evaluated and propagated on the \textit{entire} network. We used $500$ sample for \textit{Facebook-Rice} dataset and $10000$ samples for \textit{Instagram-Activities} dataset for Monte Carlo estimation of the influence of a node, which yielded very low-variance influence estimates.

\begin{figure*}[t]
 \centering
            \subfloat[Budget Problem: Influence and Cost]
    {
        \includegraphics[angle=0, width=0.28\linewidth]{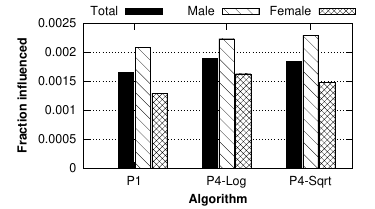} 
    \label{fig:insta_budget_fairness}
    }
            \subfloat[Cover Problem: Influence]
    {
        \includegraphics[angle=0, width=0.28\linewidth]{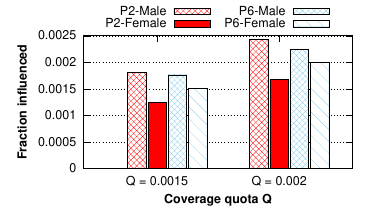} 
    \label{fig:insta_cover_fairness}
    }
            \subfloat[Cover Problem: Cost]
    {
        \includegraphics[angle=0, width=0.28\linewidth]{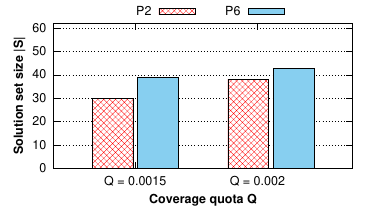} 
    \label{fig:insta_cover_cost}
    }
        \vspace{-2.0mm}
    \caption{
    [Instagram-Activities Dataset] These figures demonstrate a comparison of \tcimbudget vs \fairtcimbudget and \tcimcover vs \fairtcimcover problems. The results show that our methods fare better compared to the traditional methods. Even though the fraction of influence seems small, since the graph comprises 0.5m nodes, the differences in fractions are significant in total numbers.
              }
    \vspace{-4.0mm}
        \label{fig:rice_cover}
    \end{figure*}

\vspace{-4mm}
\subsection{TCIM under Budget Constraint}\label{sec:eval_real_budget}
In this section, we compare the results of \tcimbudget problem~\ref{eq:budget_problem} and our solution to \fairtcimbudget problem~\ref{eq:surrogate_fair_budget_problem}. Red color in all the figures discussed in this section corresponds to the solution of \tcimbudget problem~\ref{eq:budget_problem} and the blue color corresponds to our solution of \fairtcimbudget problem~\ref{eq:surrogate_fair_budget_problem}. In all the experiments in this section we used a seed budget $B = 30$. We answer the following evaluation questions using two \textit{real-world datasets} in this section: \\
--- \textbf{Q1}: How does the choice of $\Hcal(z)$ with different curvatures affect disparity? \\
--- \textbf{Q2}: How does varying seed budget affect disparity? \\
--- \textbf{Q3}: How does varying time deadline affect disparity? \\
--- \textbf{Q4}: How effective is our method in reducing disparity? \\
--- \textbf{Q5}: How much cost does our method incur?  \\
\xhdr{[Q1, Q4, Q5] Effect of different $\Hcal(z)$}
In Figures~\ref{fig:rice_budget_fairness} and \ref{fig:insta_budget_fairness}, we compare the results of \tcimbudget problem~\ref{eq:budget_problem} and \fairtcimbudget problem~\ref{eq:surrogate_fair_budget_problem} using two realizations of $\Hcal(z)$, given by: (i) $\Hcal(z):= \log(z)$ and (ii) $\Hcal(z):= \sqrt{z}$. In Figures~\ref{fig:rice_budget_fairness} the total influence are shown for all the $4$ groups while the group influences are shown for 2 out of the 4 groups which showed the maximum disparity. The results demonstrates that: (i) At a marginal reduction of total influence, as guaranteed by Theorem~\ref{theorem:fair_budget}, our proposed method significantly reduces disparity in influence in case of \text{Rice-Facebook} dataset. However, in the \textit{Instagram-Activities} dataset solving \fairtcimbudget problem results in a \textit{higher} total influence while achieving same or lower disparity for both the groups. This is in line with the finding by~\cite{stoica2019fairness}, which, using this dataset, shows that picking more diverse seeds could increase the total influence compared to greedy degree based seeding strategy. Greedy heuristic is just an approximation of the optimal solution. The optimal solution of the unfair problem cannot yield a lower influence compared to the optimal solution of the fair problem, as it adds additional constraints; (ii) as hypothesized in Section~\ref{sec:algorithms:budget}, a higher curvature function, $\Hcal(z):= \log(z)$, leads to a bigger reduction in disparity compared to $\Hcal(z):= \sqrt{z}$. In \textit{Instagram-Activities} dataset $\Hcal(z):= \sqrt{z}$ does not reduce disparity, however it does result in a higher fraction of influence in under-influenced group. \\
\xhdr{[Q2, Q4, Q5] Effect of seed budget} Figure~\ref{fig:rice_budget_cost} demonstrates the effect of allowed seed budget on the group and total influences. Groups $\Vcal_1$ and $\Vcal_2$ are represented by dash-dotted lines and dotted lines respectively and solid lines correspond to total influence. Similar to the results on synthetic dataset presented in section~\ref{sec:eval_synth_budget}, i) the disparity between the groups seems to increase with increasing budget and ii) our method consistently results in lower disparity for different seed budgets, iii) while incurring a very small cost of total influence. \\
\xhdr{[Q3, Q4] Effect of time deadline} Figure~\ref{fig:rice_budget_deadline} shows the effect of different time deadlines on the disparity between group influences, as calculated by Eq.~\ref{eq:measure_fair}. It demonstrates that: (i) the disparity of influence among groups 
increases as the value of $\tau$ increase, refer to section~\ref{sec:eval_synth_budget} for an intuitive explanation and, ii) our method is very effective in reducing disparity for different values of $\tau$. \\
\xhdr{Takeaways} We demonstrated that: (i) \fairtcimbudget, our proposed method, yields more fair solutions; (ii) this fairness is achieved at a very small reduction of the total influence compared to \tcimbudget problem, as guaranteed by Theorem~\ref{theorem:fair_budget}.

\vspace{-2.5mm}
\subsection{TCIM under Coverage Constraint}\label{sec:eval_real_cover}
Next, we compare \tcimcover problem~\ref{eq:cover_problem} and our solution to \fairtcimbudget problem~\ref{eq:surrogate_fair_cover_problem}. Red color in all the figures discussed in this section corresponds to the solution of \tcimcover problem~\ref{eq:cover_problem} and the blue color corresponds to our solution of \fairtcimcover problem~\ref{eq:surrogate_fair_cover_problem}. We answer the following evaluation question using a \textit{real-world} dataset. \\
--- \textbf{Q1}: How does our method fare compared to the traditional method over the \textit{iterations of the algorithm}? \\
--- \textbf{Q2}: How effective is our method in reducing disparity for different \textit{reach quotas}? \\
--- \textbf{Q3}: How much cost does our method incur?\\
\xhdr{[Q1] Effect of iterations} In Figures~\ref{fig:rice_cover_classic_v_fair} we compare iterations of problem~\ref{eq:cover_problem} and problem~\ref{eq:surrogate_fair_cover_problem}, realized with the log function. 
In each iteration, one seed is selected. Green line represents the required quota of coverage. Dashed-dotted lines, dotted lines and solid lines represent group $\Vcal_1$, group $\Vcal_2$ and total population, respectively. Similar to the results on Synthetic dataset, i) our method consistently results in lower disparity between the two groups, which showed the highest disparity, throughout the iteration of the seed selection algorithm; ii) our method influences all the groups up to prescribed quota; iii) by using small number of additional seeds.\\
\xhdr{[Q2, Q3] Effect of quota} Figures~\ref{fig:rice_cover_fairness}, \ref{fig:rice_cover_cost}, \ref{fig:insta_budget_fairness} and \ref{fig:insta_cover_cost} demonstrate similar results to the synthetic dataset described in section~\ref{sec:eval_synth_cover}. The \textit{keypoint} is that all the groups are covered up to the required quotas with the solution set of \fairtcimcover problem by using only a small number of additional seeds.

\xhdr{Takeaways} We compared the \tcimcover and \fairtcimcover problems in this section using a real world dataset. The results demonstrate that our method is i) effective in reducing disparity ii) by using a small additional number of seeds.

\vspace{-3.5mm}
\section{Conclusions}
In this paper, we considered the important problem of time-critical influence maximization (\tcim) under (i) budget constraint (\tcimbudget) and (ii) coverage constraint (\tcimcover). We showed that the existing algorithmic techniques aimed at maximizing total influence in the population could lead to a huge disparity in utility across the underlying groups. This can put minority groups at a big disadvantage with far-reaching consequences. 
To ensure that different groups are fairly treated, we proposed a notion of fairness and formulated two novel problems to solve \tcim under fairness considerations, namely, \fairtcimbudget and \fairtcimcover. By introducing surrogate objective functions with submodular structural properties, we provided computationally efficient algorithms with desirable guarantees. Experiments over synthetic and real-world datasets demonstrated that our algorithms lead to low disparity in the time-critical influence propagation. This work opens up a variety of new research problems, including extensions to different notions of fairness, considering more complex models of time-criticality in information propagation (such as discounting with time), and developing new optimization methods for solving the fair \tcim problem formulations.

\vspace{-3.0mm}
\section*{Acknowledgements}
 This research was supported in part by a European Research Council (ERC) Advanced Grant for the project “Foundations for Fair Social Computing”, funded under the European Union’s Horizon 2020 Framework Programme (grant agreement no. 789373)

\balance

\appendices

\begin{figure*}[t]
 \centering
            \subfloat[Budget Problem: Influence and Cost]
    {
      \includegraphics[angle=0, width=0.28\linewidth]{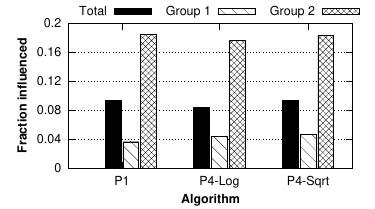} 
    \label{fig:facebook_budget_fairness}
    }
            \subfloat[Cover Problem: Influence]
    {
      \includegraphics[angle=0, width=0.28\linewidth]{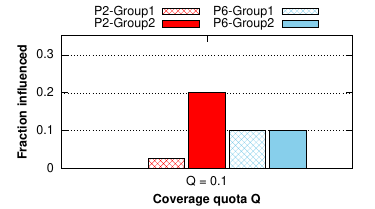} 
    \label{fig:facebook_cover_fairness}
    }
            \subfloat[Cover Problem: Cost]
    {
      \includegraphics[angle=0, width=0.28\linewidth]{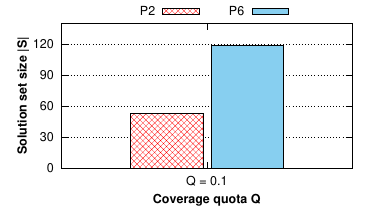} 
    \label{fig:facebook_cover_cost}
    }
        \vspace{-1.0mm}
    \caption{
                [Facebook-Snap dataset] These figures demonstrate a comparison of \tcimbudget vs \fairtcimbudget and \tcimcover vs \fairtcimcover problems. The results show that our method improves the disparity of the influence between different groups. The results for the budget problem show some improvement in the disparity. However, in comparison the reduction in the total influence is also small. One can consider a concave wrapper with a larger curvature to improve the disparity. The results for the cover problem, show a clear improvement in the disparity between the groups. 
    }
    \vspace{-4.0mm}
        \label{fig:facebook_results}
    \end{figure*}
\section{Proof of theorem 1}
\begin{proof}
Since the composition of a non-decreasing concave and a non-decreasing submodular function is submodular \cite{lin2011class}, the objective function in the following problem
\begin{align*}\label{eq:surrogate_fair_budget_problem}
  \max_{S \subseteq \Vcal} \sum_{i=1}^k \Hcal(f_\tau(S;\Vcal_{i},\Gcal))  
  \quad \text{ subject to } |S| \leq B, \tag{P4}
\end{align*}

is monotone submodular function. Let $\tilde{S}$ be the optimal solution and $\hat{S}$ be the output of the greedy algorithm for the problem~\ref{eq:surrogate_fair_budget_problem}, with a fixed budget $B$. Let $S^*$ be an optimal solutions for the following problem
\begin{align}\label{eq:budget_problem}
  \max_{S \subseteq \Vcal}{f_{\tau}(S; \Vcal, \Gcal)} 
  \quad \text{ subject to } |S| \leq B, \tag{P1} 
\end{align}

with a fixed budget $B$. Then, following the standard guarantees of submodular optimization ~\cite{1978-_nemhauser_submodular-max,krause2014submodular} (also see Section 3.4), we get the following bounds

\begin{align}\label{eq:itermediate_1}
\Hcal(f_\tau(\hat{S};V,\Gcal)) &\geq \big(1 - \frac{1}{e}\big) \cdot \Hcal(f_\tau(\tilde{S};V,\Gcal))
\end{align}
Since $\tilde{S}$ is the optimal solution of problem~\ref{eq:surrogate_fair_budget_problem}, given $S: |S| \le B$ following holds, 
\begin{align*}
\Hcal(f_\tau(\tilde{S};V,\Gcal)) &\geq \Hcal(f_\tau(S;V,\Gcal))
\end{align*}
which implies that \begin{align}\label{eq:itermediate_2}
\Hcal(f_\tau(\tilde{S};V,\Gcal)) &\geq \Hcal(f_\tau(S^*;V,\Gcal)).
\end{align}
Combining equations~\ref{eq:itermediate_1} and \ref{eq:itermediate_2} we get 
\begin{align}\label{eq:itermediate_3}
\Hcal(f_\tau(\hat{S};V,\Gcal)) &\geq \big(1 - \frac{1}{e}\big) \cdot \Hcal(f_\tau(S^*;V,\Gcal)).
\end{align}
Since $\Hcal$ is a concave function, 
\begin{align}\label{eq:itermediate_4}
f_\tau(\hat{S};V,\Gcal) &\geq \Hcal(f_\tau(\hat{S};V,\Gcal)).
\end{align}
Combining equations~\ref{eq:itermediate_3} and \ref{eq:itermediate_4} we get 
\begin{align*}
f_\tau(\hat{S};V,\Gcal) &\geq  \big(1 - \frac{1}{e}\big) \cdot \Hcal(f_\tau(S^*;V,\Gcal)),
\end{align*}
which concludes the proof. 
\end{proof}
\\
\\
\\
\\
\\
\\
\\
\\

\section{Proof of theorem 2}

\begin{proof} 
The constraint in the following problem
\begin{align*}\label{eq:surrogate_fair_cover_problem}
&\min_{S \subseteq \Vcal} |S| 
\quad \text{ subject to }  \frac{f_{\tau}(S; \Vcal_i, \Gcal)}{|\Vcal_i|}  \geq Q \; \; \; \forall i,\tag{P6}
\end{align*}

could be rewritten as follows,
\begin{align*}
\sum_{i=1}^k \min\bigg\{\frac{f_{\tau}(S; \Vcal_i, \Gcal)}{|\Vcal_i|}, Q\bigg\} \geq k \cdot Q.	
\end{align*}
The objective function in the constraint is monotone submodular function because monotone submodular functions remain monotone submodular under truncation: if $g(S)$ is monotone submodular so is $f(S):= \min(g(S),c)$ for any constant $c>=0$, and monotone submodular functions are closed under addition \cite{krause2014submodular}. Let $\tilde{S}$ be the optimal solution and $\hat{S}$ be the greedy solution of problem~\ref{eq:surrogate_fair_cover_problem} for a fixed quota $Q$. Let $S^*_i$ be the optimal solution of the following problem:

\begin{align}\label{eq:cover_problem}
 \min_{S \subseteq \Vcal} |S| 
  \quad \text{ subject to }  \frac{f_{\tau}(S; \Vcal, \Gcal)}{|\Vcal|} \geq Q, \tag{P2} 
\end{align}

with target nodes set to $\Vcal_i$ and quota set to $Q$. Then, following the standard guarantees of the submodular optimization \cite{krause2014submodular} (also see Section 3.4) we have the following bound: 
\begin{align}\label{eq:intermediate_5}
|\hat{S}| \leq \ln(1+ |\Vcal|) |\tilde{S}|. 
\end{align}
Since $\tilde{S}$ is the optimal solution of problem~\ref{eq:surrogate_fair_cover_problem}, where all the groups reach the prescribed quota $Q$, $\tilde{S}$ must be at-least as small as any other other set which also reaches all the groups up to the quota $Q$. Hence, 
\begin{align}\label{eq:intermediate_6}
|\tilde{S}| \leq \sum_{i=1}^{k} |S^*_i|.
\end{align}
Combining equations~\ref{eq:intermediate_5} and \ref{eq:intermediate_6} we get  
\begin{align*}
|\hat{S}| \leq \ln(1+ |\Vcal|) \Big(\sum_{i=1}^{k} |S^*_i|\Big),
\end{align*}
which concludes the proof.
\end{proof}

\section{Results Facebook-Snap Dataset}

In this Section, we show the results using the Facebook-Snap dataset proposed by ~\cite{mcauley2012learning}. The dataset comprise $4039$ nodes and $88234$ undirected edges. We used spectral clustering to identify $5$ topological groups in the graph. The five groups comprise $546$, $1404$, $208$, $788$ and $1093$ nodes. We run our algorithms for the entire dataset but report the results only for groups $1$ and $4$, as these groups showed the most disparity in influence using the traditional methods of influence maximization. We used edge weight on $0.01$ and $\tau = 20$. Rest of the parameters were similar to the experiments described in the paper. 

The results are shown in figure~\ref{fig:facebook_results} show that our methods are effective in reducing disparity when considering topological grouping of graphs.

\bibliographystyle{IEEEtran}
\bibliography{fair_inf_max}

\begin{thebibliography}{10}
\providecommand{\url}[1]{#1}
\csname url@samestyle\endcsname
\providecommand{\newblock}{\relax}
\providecommand{\bibinfo}[2]{#2}
\providecommand{\BIBentrySTDinterwordspacing}{\spaceskip=0pt\relax}
\providecommand{\BIBentryALTinterwordstretchfactor}{4}
\providecommand{\BIBentryALTinterwordspacing}{\spaceskip=\fontdimen2\font plus
\BIBentryALTinterwordstretchfactor\fontdimen3\font minus
  \fontdimen4\font\relax}
\providecommand{\BIBforeignlanguage}[2]{{%
\expandafter\ifx\csname l@#1\endcsname\relax
\typeout{** WARNING: IEEEtran.bst: No hyphenation pattern has been}%
\typeout{** loaded for the language `#1'. Using the pattern for}%
\typeout{** the default language instead.}%
\else
\language=\csname l@#1\endcsname
\fi
#2}}
\providecommand{\BIBdecl}{\relax}
\BIBdecl

\bibitem{richardson2002mining}
M.~Richardson and P.~Domingos, ``Mining knowledge-sharing sites for viral
  marketing,'' in \emph{KDD}, 2002.

\bibitem{ye2012exploring}
M.~Ye, X.~Liu, and W.-C. Lee, ``Exploring social influence for recommendation:
  a generative model approach,'' in \emph{SIGIR}.\hskip 1em plus 0.5em minus
  0.4em\relax ACM, 2012, pp. 671--680.

\bibitem{banerjee2013diffusion}
A.~Banerjee, A.~G. Chandrasekhar, E.~Duflo, and M.~O. Jackson, ``The diffusion
  of microfinance,'' \emph{Science}, vol. 341, no. 6144, 2013.

\bibitem{yadav2016using}
A.~Yadav, H.~Chan, A.~Xin~Jiang, H.~Xu, E.~Rice, and M.~Tambe, ``Using social
  networks to aid homeless shelters: Dynamic influence maximization under
  uncertainty,'' in \emph{AAMAS}, 2016, pp. 740--748.

\bibitem{leskovec2007cost}
J.~Leskovec, A.~Krause, C.~Guestrin, C.~Faloutsos, J.~VanBriesen, and
  N.~Glance, ``Cost-effective outbreak detection in networks,'' in
  \emph{KDD}.\hskip 1em plus 0.5em minus 0.4em\relax ACM, 2007, pp. 420--429.

\bibitem{kempe2003maximizing}
D.~Kempe, J.~Kleinberg, and {\'E}.~Tardos, ``Maximizing the spread of influence
  through a social network,'' in \emph{KDD}, 2003.

\bibitem{wallinga2004different}
J.~Wallinga and P.~Teunis, ``Different epidemic curves for severe acute
  respiratory syndrome reveal similar impacts of control measures,''
  \emph{American Journal of epidemiology}, vol. 160, no.~6, pp. 509--516, 2004.

\bibitem{leskovec2010inferring}
M.~Gomez-Rodriguez, J.~Leskovec, and A.~Krause, ``{Inferring Networks of
  Diffusion and Influence},'' in \emph{KDD}, 2010.

\bibitem{goyal2013minimizing}
A.~Goyal, F.~Bonchi, L.~V. Lakshmanan, and S.~Venkatasubramanian, ``On
  minimizing budget and time in influence propagation over social networks,''
  \emph{Social network analysis and mining}, vol.~3, no.~2, pp. 179--192, 2013.

\bibitem{carnes2007maximizing}
T.~Carnes, C.~Nagarajan, S.~M. Wild, and A.~Van~Zuylen, ``Maximizing influence
  in a competitive social network: a follower's perspective,'' in
  \emph{EC}.\hskip 1em plus 0.5em minus 0.4em\relax ACM, 2007, pp. 351--360.

\bibitem{mcpherson2001birds}
M.~McPherson, L.~Smith-Lovin, and J.~M. Cook, ``Birds of a feather: Homophily
  in social networks,'' \emph{Annual review of sociology}, vol.~27, no.~1, pp.
  415--444, 2001.

\bibitem{DBLP:conf/wsdm/SinglaW09}
A.~Singla and I.~Weber, ``Camera brand congruence in the flickr social graph,''
  in \emph{WSDM}, 2009, pp. 252--261.

\bibitem{chen2012time}
W.~Chen, W.~Lu, and N.~Zhang, ``Time-critical influence maximization in social
  networks with time-delayed diffusion process,'' in \emph{AAAI}, 2012.

\bibitem{kourtellis2013identifying}
N.~Kourtellis, T.~Alahakoon, R.~Simha, A.~Iamnitchi, and R.~Tripathi,
  ``Identifying high betweenness centrality nodes in large social networks,''
  \emph{Social Network Analysis and Mining}, vol.~3, no.~4, pp. 899--914, 2013.

\bibitem{babaei2013revenue}
M.~Babaei, B.~Mirzasoleiman, M.~Jalili, and M.~A. Safari, ``Revenue
  maximization in social networks through discounting,'' \emph{Social Network
  Analysis and Mining}, vol.~3, no.~4, pp. 1249--1262, 2013.

\bibitem{bharathi2007competitive}
S.~Bharathi, D.~Kempe, and M.~Salek, ``Competitive influence maximization in
  social networks,'' in \emph{International workshop on web and internet
  economics}.\hskip 1em plus 0.5em minus 0.4em\relax Springer, 2007, pp.
  306--311.

\bibitem{budak2011limiting}
C.~Budak, D.~Agrawal, and A.~El~Abbadi, ``Limiting the spread of misinformation
  in social networks,'' in \emph{WWW}, 2011.

\bibitem{huang2017revisiting}
K.~Huang, S.~Wang, G.~Bevilacqua, X.~Xiao, and L.~V. Lakshmanan, ``Revisiting
  the stop-and-stare algorithms for influence maximization,'' \emph{Proceedings
  of the VLDB Endowment}, vol.~10, no.~9, pp. 913--924, 2017.

\bibitem{dwork2012fairness}
C.~Dwork, M.~Hardt, T.~Pitassi, O.~Reingold, and R.~Zemel, ``Fairness through
  awareness,'' in \emph{ITCS}.\hskip 1em plus 0.5em minus 0.4em\relax ACM,
  2012, pp. 214--226.

\bibitem{feldman2015certifying}
M.~Feldman, S.~A. Friedler, J.~Moeller, C.~Scheidegger, and
  S.~Venkatasubramanian, ``Certifying and removing disparate impact,'' in
  \emph{KDD}.\hskip 1em plus 0.5em minus 0.4em\relax ACM, 2015, pp. 259--268.

\bibitem{hardt2016equality}
M.~Hardt, E.~Price, N.~Srebro \emph{et~al.}, ``Equality of opportunity in
  supervised learning,'' in \emph{NIPS}, 2016, pp. 3315--3323.

\bibitem{conitzer2019group}
V.~Conitzer, R.~Freeman, N.~Shah, and J.~W. Vaughan, ``Group fairness for the
  allocation of indivisible goods,'' in \emph{AAAI}, 2019.

\bibitem{fain2018fair}
B.~Fain, K.~Munagala, and N.~Shah, ``Fair allocation of indivisible public
  goods,'' in \emph{EC}.\hskip 1em plus 0.5em minus 0.4em\relax ACM, 2018, pp.
  575--592.

\bibitem{segal2017democratic}
E.~Segal-Halevi and W.~Suksompong, ``Democratic fair allocation of indivisible
  goods,'' in \emph{IJCAI}, 2018, pp. 482--488.

\bibitem{suksompong2018approximate}
W.~Suksompong, ``Approximate maximin shares for groups of agents,''
  \emph{Mathematical Social Sciences}, vol.~92, pp. 40--47, 2018.

\bibitem{fish_dmt}
B.~Fish, A.~Bashardoust, danah boyd, S.~A. Friedler, C.~Scheidegger, and
  S.~Venkatasubramanian, ``{Gaps in Information Access in Social Networks},''
  in \emph{{WWW}}, 2019.

\bibitem{bredereck2018multiwinner}
R.~Bredereck, P.~Faliszewski, A.~Igarashi, M.~Lackner, and P.~Skowron,
  ``Multiwinner elections with diversity constraints,'' in \emph{AAAI}, 2018.

\bibitem{faliszewski2017multiwinner}
P.~Faliszewski, P.~Skowron, A.~Slinko, and N.~Talmon, ``Multiwinner voting: A
  new challenge for social choice theory,'' \emph{Trends in computational
  social choice}, vol.~74, 2017.

\bibitem{benabbou2018diversity}
N.~Benabbou, M.~Chakraborty, X.-V. Ho, J.~Sliwinski, and Y.~Zick, ``Diversity
  constraints in public housing allocation,'' in \emph{AAMAS}, 2018, pp.
  973--981.

\bibitem{aghaei2019learning}
S.~Aghaei, M.~J. Azizi, and P.~Vayanos, ``Learning optimal and fair decision
  trees for non-discriminative decision-making,'' \emph{arXiv preprint
  arXiv:1903.10598}, 2019.

\bibitem{rahmattalabi2019exploring}
A.~Rahmattalabi, P.~Vayanos, A.~Fulginiti, E.~Rice, B.~Wilder, A.~Yadav, and
  M.~Tambe, ``Exploring algorithmic fairness in robust graph covering
  problems,'' in \emph{Advances in Neural Information Processing Systems},
  2019, pp. 15\,750--15\,761.

\bibitem{khajehnejad2020adversarial}
M.~Khajehnejad, A.~A. Rezaei, M.~Babaei, J.~Hoffmann, M.~Jalili, and A.~Weller,
  ``Adversarial graph embeddings for fair influence maximization over social
  networks,'' \emph{arXiv preprint arXiv:2005.04074}, 2020.

\bibitem{Tsang_dmt}
A.~Tsang, B.~Wilder, E.~Rice, M.~Tambe, and Y.~Zick, ``{Group-Fairness in
  Influence Maximization},'' \emph{arXiv preprint arXiv:1903.00967}, 2019.

\bibitem{DBLP:conf/aaai/KrauseG07}
A.~Krause and C.~Guestrin, ``Near-optimal observation selection using
  submodular functions,'' in \emph{AAAI}, 2007, pp. 1650--1654.

\bibitem{DBLP:conf/ijcai/SinglaHKWK15}
A.~Singla, E.~Horvitz, P.~Kohli, R.~White, and A.~Krause, ``Information
  gathering in networks via active exploration,'' in \emph{IJCAI}, 2015, pp.
  891--988.

\bibitem{DBLP:conf/uai/GuilloryB11}
A.~Guillory and J.~A. Bilmes, ``Active semi-supervised learning using
  submodular functions,'' in \emph{UAI}, 2011, pp. 274--282.

\bibitem{krause2014submodular}
A.~Krause and D.~Golovin, ``Submodular function maximization.'' 2014.

\bibitem{1978-_nemhauser_submodular-max}
G.~Nemhauser, L.~Wolsey, and M.~Fisher, ``An analysis of the approximations for
  maximizing submodular set functions,'' \emph{Math. Prog.}, vol.~14, pp.
  265--294, 1978.

\bibitem{1998-_feige_threshold-of-ln-n}
U.~Feige, ``A threshold of ln n for approximating set cover,'' \emph{Journal of
  the ACM}, vol.~45, pp. 314--318, 1998.

\bibitem{mislove2010you}
A.~Mislove, B.~Viswanath, K.~P. Gummadi, and P.~Druschel, ``You are who you
  know: inferring user profiles in online social networks,'' in
  \emph{WSDM}.\hskip 1em plus 0.5em minus 0.4em\relax ACM, 2010, pp. 251--260.

\bibitem{stoica2018algorithmic}
A.-A. Stoica, C.~Riederer, and A.~Chaintreau, ``Algorithmic glass ceiling in
  social networks: The effects of social recommendations on network
  diversity,'' in \emph{Proceedings of the 2018 World Wide Web Conference},
  2018, pp. 923--932.

\bibitem{stoica2019fairness}
A.-A. Stoica and A.~Chaintreau, ``Fairness in social influence maximization,''
  in \emph{Companion Proceedings of The 2019 World Wide Web Conference}, 2019,
  pp. 569--574.

\bibitem{lin2011class}
H.~Lin and J.~Bilmes, ``A class of submodular functions for document
  summarization,'' in \emph{Proc. ACL: HLT - Vol. 1}, 2011, pp. 510--520.

\bibitem{mcauley2012learning}
J.~J. McAuley and J.~Leskovec, ``Learning to discover social circles in ego
  networks.'' in \emph{NIPS}, vol. 2012.\hskip 1em plus 0.5em minus 0.4em\relax
  Citeseer, 2012, pp. 548--56.

\end{thebibliography}

\end{document}